\documentclass[11pt]{article}
\usepackage{amsmath,amssymb,amsthm}
\usepackage[final]{showkeys}
\usepackage{mathbbol,bm,bbm}

\newcommand{\Cx}{{\mathbb C}}

\newcommand{\Rl}{{\mathbb R}}

\newcommand{\idty}{\Eins}
\DeclareMathOperator{\id}{id}

\newcommand{\co}[1]{\textsf{#1}}

\DeclareMathOperator{\tr}{Tr}
\newcommand{\<}{\langle}
\renewcommand{\>}{\rangle}
\providecommand{\abs}[1]{\lvert#1\rvert}
\providecommand{\norm}[1]{\lVert#1\rVert}

\renewcommand{\c}[1]{\mathcal{#1}}

\newcommand{\s}[1]{\mathsf{#1}}
\renewcommand{\r}[1]{\mathrm{#1}}
\newcommand{\jam}{Jamio\l kowski}
\renewcommand{\det}{\r{Det}\,}

\newtheorem{lemma}{Lemma}
\newtheorem{proposition}{Proposition}
\newtheorem{definition}{Definition}
\newtheorem{theorem}{Theorem}
\newtheorem{corollary}{Corollary}


\addtolength{\textwidth}{3cm}
\addtolength{\hoffset}{-1.5cm}

\begin{document}

\begin{center}
{\LARGE  Composition of quantum states and \\[12pt] 
dynamical subadditivity} \\[12pt]
W.~Roga$^1$, M.~Fannes$^2$ and K.~{\.Z}yczkowski$^{1,3}$
\end{center}

\medskip
\noindent
$^1$ Instytut Fizyki im.~Smoluchowskiego,
Uniwersytet Jagiello{\'n}ski,
PL-30-059 Krak{\'o}w, Poland \\
$^2$ Instituut voor Theoretische Fysica,
Universiteit Leuven, B-3001 Leuven, Belgium \\
$^3$Centrum Fizyki Teoretycznej, Polska Akademia Nauk,
PL-02-668 Warszawa, Poland

\medskip
\noindent
Email: \texttt{<wojciechroga@wp.pl>}, \texttt{<mark.fannes@fys.kuleuven.be>}, and \\ \texttt{<karol@tatry.if.uj.edu.pl>} 

\bigskip
\noindent
\textbf{Abstract:}
We introduce a composition of quantum states of a bipartite system which is based on the reshuffling of density matrices. This non-Abelian product is associative and stems from the composition of quantum maps acting on a simple quantum system. It induces a semi-group in the subset of states with maximally mixed partial traces. Subadditivity of the von~Neumann entropy with respect to this product is proved. It is equivalent to subadditivity of the entropy of bistochastic maps with respect to their composition, where the entropy of a map is the entropy of the corresponding state under the \jam\ isomorphism. Strong dynamical subadditivity of a concatenation of three bistochastic maps is established. Analogous bounds for the entropy of a composition are derived for general stochastic maps. In the classical case they lead to new bounds for the entropy of a product of two stochastic matrices.

\medskip
\noindent
PACS: 02.10.Ud (Mathematical methods in physics, Linear algebra), 03.67.-a (Quantum mechanics, field theories, and special relativity, Quantum information), 03.65.Yz (Decoherence; open systems; quantum statistical methods)

\section{Introduction}

General quantum dynamics are described by quantum stochastic maps, also called \co{quantum channels} or \co{quantum operations}. It is therefore crucial to investigate their properties in order to understand admissible dynamics in a quantum state space. The \co{\jam\ isomorphism}~\cite{Ja72,ZB04} associates with any quantum stochastic operation a quantum state acting on an extended space. Up to normalization, this state is nothing else than the \co{dynamical or Choi matrix} of the map. Thus the features of a quantum map are encoded in a state.

The spectral decomposition of the \jam\ state yields a canonical Kraus decomposition of the map, i.e.\ realizes the map in terms of measurement operators. For a unitary evolution, the corresponding state is pure and the Kraus form consists of a single unitary operator. For the completely depolarizing channel, the state is maximally mixed and the Kraus decomposition consists of many terms. The degree of mixing of the measurement operators required to construct a quantum map by its canonical Kraus form can therefore be estimated by a quantity like the \co{entropy of the map} which is actually the entropy of the associated state. Thus this entropy vanishes for a unitary evolution and reaches its maximal value for the completely depolarizing channel.

The aim of this paper is to analyze properties of the composition of stochastic maps~\cite{WR07}. Using the subadditivity of entropy for composite systems we prove an analogous \co{dynamical subadditivity} for bistochastic maps, i.e.\ maps which preserve the identity, see~(\ref{addquant1}). A similar inequality proved for the concatenation of three bistochastic maps may be called \co{strong dynamical subadditivity}~(\ref{strongb}). Dynamical subadditivity generalizes to general stochastic maps by adding an extra term which vanishes for bistochastic maps. Restricting to diagonal states we obtain bounds for the entropy of the product of two classical stochastic matrices, see~(\ref{addquant2b}). This generalizes a recent result of S\l omczy\'nski on entropy of a product of bistochastic matrices~\cite{Sl02}.

Composition of quantum maps induces an action in the space of quantum states on a bipartite system. We analyze properties of this action which allows to construct a semi-group in the space of Hermitian matrices. Composition of states induces also a semi-group structure in the set of positive definite operators on a bipartite system whose partial traces are proportional to the identity.

The main tools are coupling techniques, which associate to quantum maps states on composite systems, combined with subadditivity and strong subadditivity of quantum entropy. The explicit constructions that we use are related to similar constructions of Lindblad and techniques used in studying quantum dynamical entropy in the sense of~\cite{alifan}. There are also connections with quantum coherent information and inequalities as the quantum information data processing inequality, see~\cite{SN96}.       

To illustrate the composition of states in action we work out the case of quasi-free Fermionic maps. In this setup we derive an explicit form of the composition and discuss dynamical subadditivity.

The paper is organized as follows. In Section~II the necessary properties of quantum states and quantum maps are reviewed. In particular we consider the case of quantum operations with a diagonal dynamical matrix and show that any such stochastic or bistochastic quantum map reduces to a stochastic or bistochastic matrix acting on classical probability vectors. The notion of composition of states is introduced in Section~III where some of its properties are analyzed. In Section~IV we analyze the entropy of maps and formulate dynamical and strong dynamical subadditivity for compositions of bistochastic maps. Furtheremore we discuss analogous results in a more general case of stochastic maps. More explicit examples of low-dimensional quasi-free structures are presented in the Appendix.

\section{Quantum states and quantum maps}

\subsection{Quantum dynamical matrices}

Let $\rho$ denote a $N$-dimensional density matrix i.e.\ a Hermitian, positive operator, satisfying the trace normalization condition $\tr \rho = 1$. The expectation value of an observable $X$, i.e.\ a $N$-dimensional matrix, is given by the usual relation
\begin{equation}
 \<X\> = \tr \rho\,X \ . 
\end{equation}  
Such expectation functionals are called states. We shall in the sequel identify $\<\ \>$ with its corresponding $\rho$. Let $\c D_N$ denote the set of generally mixed quantum states acting on the $N$ dimensional Hilbert space $\c H_N$.  It is a convex, compact set of real dimensionality $N^2-1$. In the case of a qubit, i.e.\ $N=2$, the space of mixed states is the Bloch ball, $\c D_2 = B_3 \subset \Rl^3$. More generally, there is a one to one correspondence between arbitrary, i.e.\ not necessarily positive linear functionals $F$ on the observables and $N$-dimensional matrices $\rho$ 
\begin{equation}
 F(X) = \tr \rho\,X \ .
\end{equation} 

General quantum maps, sometimes called super-operators, are linear transformations either of the observables (Heisenberg picture) or of the functionals on the observables (Schr\"odin\-ger picture). In this paper we shall mostly use the Schr\"odinger picture and denote such quantum maps by $\Phi$. The adjoint $\Phi^\dagger$ of a super-operator $\Phi$ is given by 
\begin{equation}
 \Phi^\dagger(\rho) = \bigl( \Phi(\rho^*) \bigr)^* .
\end{equation}
Fixing a basis in $\c H_N$ we identify a $N$-dimensional matrix $\rho$ with a vector of dimension $N^2$ just by writing the entries of $\rho$ in lexicographical order. So the entry $\rho_{m\mu}$ is placed on the $\bigl( (m-1)N + \mu \bigr)$-th row. A general linear quantum map, 
\begin{equation}
 \rho \mapsto \rho' := \Phi(\rho)
\end{equation} 
may be described by a matrix of size $N^2$ still denoted by $\Phi$,
\begin{equation}
 \rho'_{m\mu}  
 = \Phi_{\substack{m\mu\\n\nu}}\, \rho_{n\nu} \ ,
\label{dynmatr1}
\end{equation}
where Einstein's summation convention is taken.

Another convenient way to describe quantum maps is to use the \co{Choi-\jam\ encoding} or \co{dynamical matrix} $D_\Phi$~\cite{SMR61}. It amounts to a reordering of matrix elements of $\Phi$
\begin{equation}
 D_{\Phi} \equiv \Phi^{\s R}
 \qquad\text{so that}\qquad
 \bigl( D_\Phi \bigr)_{\substack{mn\\\mu\nu}} 
 = (\Phi^{\s R})_{\substack{mn\\\mu\nu}} 
 = \Phi_{\substack{m\mu\\n\nu}} \ .
\label{dynmatr3}
\end{equation}
Let us consider the projector on the maximally entangled state $|\psi^+\> := \frac{1}{\sqrt N} \sum_{m=1}^N |m\> \otimes |m\>$
\begin{equation}
 P_+ 
 := \bigl| \psi^+ \bigr\> \bigl\< \psi^+ \bigr| 
 = \frac{1}{N}\, \sum_{m\mu} |m\>\<\mu| \otimes |m\>\<\mu|\ .
\end{equation}
We can in a similar way as above identify $P_+$ with a vector in a space of dimension $N^2 \times N^2$. Its entries are
\begin{align}
 \bigl( P_+ \bigr)_{mn\,\mu\nu} 
 &= \frac{1}{N}\, \sum_{k\kappa} 
 \bigl( |k\> \<\kappa| \otimes |k\>\<\kappa| \bigr)_{mn\,\mu\nu}
 = \frac{1}{N}\, \sum_{k\kappa} 
 \bigl( |k\> \<\kappa| \bigr)_{m\mu}\, \bigl( |k\>\<\kappa| \bigr)_{n\nu}
\nonumber \\
 &= \frac{1}{N}\, \sum_{k\kappa} 
 \delta_{km}\, \delta_{\kappa\mu}\, \delta_{kn}\, \delta_{\kappa\nu}
 = \frac{1}{N}\, \delta_{mn}\, \delta_{\mu\nu}\ .
\end{align} 
We now compute
\begin{align}
 \Bigl( \bigl( \Phi \otimes \id \bigr)(P_+) \Bigr)_{mn\,\mu\nu}
 &= \bigl( \Phi \otimes \id \bigr)_{\substack{mn\,\mu\nu\\m'n'\,\mu'\nu'}}\,
 \bigl( P_+ \bigr)_{m'n'\,\mu'\nu'} 
\nonumber \\
 &= \Phi_{\substack{m\mu\\m'\mu'}}\, \delta_{nn'}\, \delta_{\nu\nu'}\,
 \bigl( P_+ \bigr)_{m'n'\,\mu'\nu'} 
\nonumber \\
 &= \frac{1}{N}\, \Phi_{\substack{m\mu\\m'\mu'}}\, \delta_{nn'}\, \delta_{\nu\nu'}\,
 \delta_{m'n'}\, \delta_{mu'\nu'} 
\nonumber \\
 &= \frac{1}{N}\, \Phi_{\substack{m\mu\\n\nu}} = \bigl( D_\Phi \bigr)_{\substack{mn\\\mu\nu}}
\end{align}
So we see that, up to a factor $N$, $D_\Phi$ is the action of $\id \otimes \Phi$ on the one-dimensional projection $P_+$ on the maximally entangled state. The map $\Phi \mapsto D_\Phi$ is linear and it intertwines adjoints
\begin{equation}
 D_{\Phi^\dagger} = \bigl( D_\Phi \bigr)^*.
\end{equation}  
Equation~(\ref{dynmatr3}) may be considered as the definition of the \co{reshuffling transformation}, written $\Phi \to \Phi^{\s R}$, which is defined for any matrix $\Phi$ acting on the Hilbert space $\c H_{N^2} = \c H_N \otimes \c H_N$~\cite{ZB04}. It should be stressed that the reshuffling operation depends on the distinguished basis in $\c H_N$ that we have used.

Choi's theorem~\cite{Cho75a}, proves that a map $\Phi$ is \co{completely positive} (CP), which means that the extended map $\Phi\otimes {\id}$ is positive for any size of the extension, if and only if the corresponding dynamical matrix $D_\Phi$, also called \co{Choi matrix}, is positive, $D_\Phi\ge 0$. The eigenvalue decomposition of the dynamical matrix of a CP map $\Phi$ leads to the canonical Kraus form~\cite{Kr71} of the map
\begin{equation}
 \Phi(\rho) = \sum_{\alpha=1}^{N^2} A_\alpha\rho A_\alpha^{\dagger} \ .
\label{Kraus}
\end{equation}
where the Kraus operators are orthogonal 
\begin{equation}
 \langle A_\alpha|A_\beta\rangle
 =\tr A_\alpha^\dagger A_\beta 
 = d_\alpha \delta_{\alpha\beta}\; , 
\end{equation}
so that the non-negative weights $d_\alpha$ become the eigenvalues of the dynamical matrix $D_\Phi$. Hence, in this (almost) canonical form the number of Kraus operators does not exceed $N^2$. 

A quantum map $\Phi$ is \co{trace preserving} (TP) if $\tr \Phi(\rho) = \tr \rho$ for any $\rho$.  The corresponding dynamical matrix $D_\Phi$ acts on the composite Hilbert space $\c H_{N^2} = \c H_A \otimes \c H_B$ and, in terms of the dynamical matrix, trace preserving means that
\begin{equation}
 \tr_A D_\Phi  =  \idty \ .
\label{partrace}
\end{equation}
This implies in particular that $\tr D_\Phi = N$. Completely positive trace preserving maps (CPTP maps) are often called \co{quantum operations or quantum stochastic maps}. Since the dynamical map of a quantum operation $\Phi$ is positive and normalized as in~(\ref{partrace}) the rescaled matrix $\frac{1}{N}\, D_\Phi$ is a state on the extended Hilbert space $\c H_N \otimes \c H_N$, see~\cite{Ja72,BZ06}. We shall say that
\begin{equation}
 \varsigma := {\textstyle\frac{1}{N}}\, D_\Phi
\label{jam}
\end{equation}
is the \co{\jam\ state} associated to $\Phi$. A quantum map is called \co{unital} if it leaves the maximally mixed state invariant. This is the case iff
\begin{equation}
 \tr_B D_\Phi  = \idty \ ,
\label{partraceB}
\end{equation}
a condition dual to~(\ref{partrace}). A CP quantum map which is trace preserving and unital is called \co{bistochastic}. A composition of bistochastic maps is still bistochastic.

\subsection{Classical case - diagonal dynamical matrices.}

Let us diagonalize a density matrix. The elements on the diagonal may be interpreted as a classical probability vector $P$ with components $p_i = \rho_{ii}$, i.e.\ $p_i \ge 0$ and $\sum_{i=1}^N p_i = 1$.  In a similar way a quantum map $\Phi$ reduces to a classical one if the dynamical matrix $D = \Phi^R$ is diagonal, $D_{\substack{ab\\cd}} = T_{ab}\delta_{ac}\delta_{bd}$.
 
Reshaping the diagonal of the dynamical matrix which has dimension $N^2$ one obtains a matrix $T$ of dimension $N$
\begin{equation}
 T = T(\Phi) 
 \qquad\text{with}\qquad 
 T_{ij} = \Phi_{\substack{ii\\jj}},
\label{tij} 
\end{equation} 
with no summation performed. Positivity of $D$ implies that all elements of $T$ are non negative.  Furthermore, the partial trace condition~(\ref{partrace}) implies that the matrix $T$ is stochastic since $\sum_{i=1}^N T_{ij} = 1$ for all $j=1,2,\ldots,N$. In fact, on diagonal matrices the action of a diagonal dynamical matrix $D$ reduces to a Markov transition of a probability vector, $P'=TP$.

If, additionally, the complementary partial trace condition~(\ref{partraceB}) holds, then the matrix $T$ is bistochastic $\sum_{j=1}^N T_{ij} = 1$ for all $i=1,2,\ldots, N$ and the uniform vector $P_* := (\frac{1}{N}, \ldots, \frac{1}{N})$ is invariant under multiplication by a bistochastic matrix $T$. Hence quantum stochastic and bistochastic maps acting in the space $\c D_N$ of quantum states can be considered as a direct generalizations of stochastic and bistochastic matrices, which act on classical probability vectors.

Note that the quantum identity map, $\Phi = \id$, is not classical, since the dynamical matrix $D = \Phi^{\s R}$ is not diagonal, and the off--diagonal elements of $\rho$ are preserved. On the other hand, the coarse graining map,
\begin{equation}
 \Phi_{\r{CG}}(\rho) := \sum_{i=1}^N |i\> \< i |\rho |  i\> \<i| 
\label{coarse}
\end{equation}
which strips away all off diagonal elements of a density operator is described by a diagonal dynamical matrix.  The corresponding stochastic matrix is the identity, $T\bigl(\Phi_{\r{CG}}\bigr) = \idty$, since all diagonal elements of $\rho$ remain untouched under the action of~(\ref{coarse}). Let us also distinguish the flat stochastic matrix $T_*$ whose elements are all equal, $(T_*)_{ij} = \frac{1}{N}$.  It is a bistochastic matrix which maps any probability vector $P$ into the uniform one, $T_*P=P_*$.

\subsection{Quasi-free Fermionic states and maps}

Quasi-free states and maps on a Fermionic algebra will be used as an example. More details and references to the original papers can be found in~\cite{alifan} and~\cite{evakaw}. In appendix we shall describe in more detail these objects for systems with few modes. 
 
The algebra $\c A(\c H_N)$ of observables of a $N$-mode Fermionic system is generated by creation and annihilation operators $a^*(\varphi)$ and $a(\varphi)$ where $\varphi$ belongs to the one-particle space $\c H_N$. These operators satisfy the canonical anti-commutation relations (CAR)
\begin{equation}
 a^*(\varphi + \alpha\psi) = a^*(\varphi) + \alpha a^*(\psi)\ , \quad
 \{a^*(\varphi), a^*(\psi)\} = 0
 \qquad\text{and}\qquad
 \{a^*(\varphi), a(\psi)\} = \<\psi,\varphi\> \ .
\end{equation}  
It is not hard to see that $\c A(\c H_N)$ is isomorphic to the algebra of matrices of dimension $2^N$. The space $\c H_N$ is called the one-particle space.

There exists a widely used class of states and maps called quasi-free. These objects correspond quite literally to Gaussian objects for Fermionic systems and they are fully characterized by operators on the one-particle space. Essentially every effective description of Fermionic systems, such as the Hartree-Fock approximation, is based on quasi-free structures.

\begin{proposition}[Quasi-free states]
Let $Q$ be a $N$-dimensional matrix. Every element of $\c A(\c H_N)$ can be written as a linear combination of \co{ordered monomials}, i.e.\ monomials in creation and annihilation operators where the $a^*$'s appear to the left of the $a$'s. Define a linear functional $\<\ \>_Q$ on $\c A(\c H_N)$ by extending linearly its definition on ordered monomials 
\begin{align}
 &\bigr\< a^*(\varphi_1) \cdots a^*(\varphi_n) a(\psi_n) \cdots a(\psi_1)
 \bigr\>_Q
 := \det \Bigl( \bigl[  \<\psi_j \,,\, Q\,\varphi_i\> \bigr]\Bigr) 
\\
\intertext{and}
 &\< \r{mon} \>_Q = 0 \text{ for every other ordered monomial $\r{mon}$.}
\end{align}
Then $\<\ \>_Q$ is a state on $\c A(\c H_N)$ iff $0 \le Q \le \idty$, i.e.\ $Q$ is Hermitian and every eigenvalue of $Q$ belongs to $(0,1)$. States of the type $\<\ \>_Q$ are called \co{quasi-free} and $Q$ is called the \co{symbol} of the state.
\end{proposition}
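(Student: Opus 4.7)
The plan is to verify the three defining properties of a state---linearity, normalization, and positivity---for $\<\ \>_Q$. Linearity is built into the definition and normalization is immediate: the unit of $\c A(\c H_N)$ is the empty ordered monomial, and the determinant of the empty matrix equals $1$. The content therefore lies in the equivalence between positivity of the functional and the condition $0 \le Q \le \idty$.

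For necessity I would exploit the CAR $\{a^*(\varphi), a(\varphi)\} = \<\varphi, \varphi\>$, which lets me rewrite $a(\varphi) a^*(\varphi) = \|\varphi\|^2\, \idty - a^*(\varphi) a(\varphi)$. Applying positivity of $\<\ \>_Q$ to the manifestly positive operators $a^*(\varphi) a(\varphi)$ and $a(\varphi) a^*(\varphi)$ yields $\<\varphi, Q\varphi\> \ge 0$ and $\<\varphi, (\idty - Q)\varphi\> \ge 0$ respectively, giving $0 \le Q \le \idty$; Hermiticity of $Q$ then follows from the automatic self-adjointness of any positive linear functional on a $*$-algebra.

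For sufficiency, the strategy is to exhibit a density matrix whose expectation values reproduce $\<\ \>_Q$. Diagonalize $Q = \sum_i \lambda_i\, |e_i\>\<e_i|$ with $\lambda_i \in [0,1]$, set $a_i := a(e_i)$, $n_i := a_i^* a_i$, and define
\[
 \rho := \prod_{i=1}^N \bigl( \lambda_i\, n_i + (1-\lambda_i)(\idty - n_i) \bigr).
\]
Since the $n_i$ are mutually commuting orthogonal projections, $\rho$ is a positive operator whose trace, expanded on the Fock basis, factorizes as $\prod_i\bigl(\lambda_i + (1-\lambda_i)\bigr) = 1$. A short computation using the CAR gives $\tr(\rho\, a^*(\varphi)\, a(\psi)) = \<\psi, Q\varphi\>$, matching the two-point function.

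The main obstacle is verifying the determinantal identity for the higher correlators,
\[
 \tr\bigl(\rho\, a^*(\varphi_1)\cdots a^*(\varphi_n)\, a(\psi_n)\cdots a(\psi_1)\bigr) = \det\bigl([\<\psi_j, Q\varphi_i\>]\bigr),
\]
together with the vanishing of $\tr(\rho\, \r{mon})$ on every other ordered monomial. The second assertion is easy: $\rho$ commutes with the total number operator $\sum_i n_i$, so only number-preserving monomials can contribute. The first is the Fermionic Wick theorem for gauge-invariant quasi-free states, which I would prove by induction on $n$, anti-commuting $a(\psi_n)$ through the $a^*(\varphi_i)$'s via the CAR and recognizing the resulting alternating sum as the Laplace expansion of the determinant along its last row, using the two-point values already established.
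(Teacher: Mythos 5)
The paper does not actually prove this proposition: it is stated as standard background on quasi-free Fermionic states, with the reader referred to the literature (\cite{alifan}, \cite{evakaw}). So your attempt can only be measured against the textbook argument, and your overall architecture is exactly that argument: normalization from the empty determinant, necessity of $0 \le Q \le \idty$ by testing positivity on $a^*(\varphi)a(\varphi)$ and $a(\varphi)a^*(\varphi)$, and sufficiency by exhibiting the explicit product density matrix $\rho = \prod_i\bigl(\lambda_i n_i + (1-\lambda_i)(\idty - n_i)\bigr)$ and verifying the Wick/determinant formula. The necessity half and the construction, positivity and normalization of $\rho$ are all fine (and note that $\<\varphi, Q\varphi\> \ge 0$ for all $\varphi$ already forces Hermiticity of $Q$ over a complex Hilbert space, so you do not even need to invoke self-adjointness of positive functionals separately).

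The one genuine gap is in the final induction for the $n$-point function. Anticommuting $a(\psi_n)$ through the $a^*(\varphi_i)$'s ``via the CAR'' produces the bare pairings $\<\psi_n,\varphi_i\>$, not $\<\psi_n, Q\varphi_i\>$, and in addition leaves a boundary term $\pm\, a(\psi_n)\, a^*(\varphi_1)\cdots a^*(\varphi_n)\, a(\psi_{n-1})\cdots a(\psi_1)$ whose expectation does not vanish in general; the alternating sum you obtain is therefore not the Laplace expansion of $\det\bigl[\<\psi_j, Q\varphi_i\>\bigr]$. The factor $Q$ must be injected by the density matrix, not by the CAR alone. The standard fix: for $0 < \lambda_i < 1$ one has the pull-through relation $\rho\, a^*(e_i) = \frac{\lambda_i}{1-\lambda_i}\, a^*(e_i)\, \rho$, and combining it with cyclicity of the trace gives $\tr\bigl(\rho\, a^*(\varphi)\, X\bigr) = \tr\bigl(\rho\, \{a^*(Q\varphi), X\}\bigr)$ for any monomial $X$. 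Expanding the anticommutator of $a^*(Q\varphi_1)$ with $X = a^*(\varphi_2)\cdots a^*(\varphi_n)\, a(\psi_n)\cdots a(\psi_1)$ kills everything except the terms pairing $a^*(Q\varphi_1)$ with some $a(\psi_j)$, each contributing $\<\psi_j, Q\varphi_1\>$ with the correct alternating sign times the reduced $(n-1)$-point function --- precisely the Laplace expansion of the determinant; the degenerate cases $\lambda_i \in \{0,1\}$ follow by continuity. With that replacement your induction closes and the proof is complete.
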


The set of quasi-free states on $\c A(\c H_N)$ is not convex, so it is natural to consider the convex hull of the quasi-free states which is a proper subset of the full state space. It is known that any quasi-free state can be decomposed into a convex combination of quasi-free states whose symbols are projectors. Moreover, a quasi-free state is pure iff its symbol is a projection operator. Therefore the extreme points of the convex hull of quasi-free states consist of the states with a projector as symbol. It is remarkable that the extreme points of the convex set of symbols $(0,\idty)$ precisely consists in the projection operators on $\c H_N$. The maximally mixed state, i.e.\ the normalized trace on $\c A(\c H_N)$ is quasi-free and given by $Q = \frac{1}{2}\, \idty$.

Let $\c A(\c H_N \oplus \c H_N)$ be a CAR algebra for a composite system, it is isomorphic to the tensor product of two copies of $\c A(\c H_N)$. It is however, because of the Fermionic nature of the system, more natural to consider the graded tensor product where each of the two parties is identified by the maps
\begin{equation}
 a^*(\varphi_1) \hookrightarrow a^*(\varphi_1 \oplus 0)
 \qquad\text{and}\qquad
 a^*(\varphi_2) \hookrightarrow a^*(0 \oplus \varphi_2) \ .
\end{equation}    
The quasi-free state with symbol 
\begin{equation}
 {\textstyle\frac{1}{2}}\, \begin{pmatrix} 
 \idty&\idty\\\idty&\idty \end{pmatrix}
\label{mmqf}
\end{equation}
is pure on $\c A(\c H_N \oplus \c H_N)$ and its restriction to each of the parties $\c A(\c H_N)$ is maximally mixed, therefore it is a maximally entangled state on the composite system.

Quasi-free CPTP maps are conveniently described in Heisenberg picture by their explicit action on monomials~\cite{eva,fanroc}. For the purpose of this paper it will suffice to know their action on quasi-free states. 

\begin{definition}[Quasi-free CPTP maps]
A CPTP quasi-free map is a CPTP map on the state space of $\c A(\c H_N)$ which maps quasi-free states in quasi-free states. 
\end{definition}

It can be shown~\cite{dirfanpog} that quasi-free TPCP maps are determined by a couple of $N$-dimensional matrices $R$ and $Z$ which satisfy the constraint
\begin{equation}
 0 \le Z \le \idty - R^*R \ .
\label{qfcp}
\end{equation}
Moreover, the action on quasi-free states is given by
\begin{equation}
 \<\ \>_Q  \mapsto \<\ \>_{R^*Q\,R + Z}\; .
\end{equation}
It is easily seen that the constraints~(\ref{qfcp}) are necessary and sufficient to guarantee that for every symbol $Q$ the expression $R^*Q\,R + Z$ is again a symbol. In order to mark the dependence of the map on $R$ and $Z$ we shall use the notation $\Phi_{R,Z}$. The map $\<\ \>_Q  \mapsto \<\ \>_{R^*Q\,R + Z}$ is affine on $(0,\idty)$ but not linear but we may use the standard representation
\begin{equation}
 Q \approx \begin{pmatrix} Q\\\idty \end{pmatrix}
 \qquad\text{and}\qquad
 \Phi_{R,Z} \sim 
 \begin{pmatrix} \r{Ad}(R)&Z\\0&\id\end{pmatrix} \ .
\label{phiqf}
\end{equation}
Here $\r{Ad}(R)$ is the adjoint map $X \mapsto R^*X\,R$. Composing $\Phi_{R_2,Z_2} \circ \Phi_{R_1,Z_1}$ returns the map $\Phi_{R_3,Z_3}$ with
\begin{equation}
 R_3 = R_2 R_1
 \qquad\text{and}\qquad
 Z_3 = R_2^* Z_1 R_2 + Z_2 \ .
\end{equation}
In terms of the representation~(\ref{phiqf}) this amounts to the usual matrix multiplication
\begin{equation}
 \begin{pmatrix} \r{Ad}(R_3)&Z_3\\0&\id\end{pmatrix}
 = \begin{pmatrix} \r{Ad}(R_2)&Z_2\\0&\id\end{pmatrix} 
 \begin{pmatrix} \r{Ad}(R_1)&Z_1\\0&\id\end{pmatrix} \ .
\end{equation}
The representation~(\ref{phiqf}) is the quasi-free analogue of~(\ref{dynmatr1}).

Using~(\ref{mmqf}) we see that the \jam\ state associated with $\Phi_{R,Z}$ is quasi-free on $\c A(\c H_N \oplus \c H_N)$ with symbol
\begin{equation}
 {\textstyle\frac{1}{2}}\, \begin{pmatrix} \idty&R\\R^*&R^*R + 2Z \end{pmatrix} \ .
\label{dmqf}
\end{equation}
The set of quasi-free CPTP maps is again not convex. A quasi-free map $\Phi_{R,Z}$ is extreme within the set of CPTP maps iff 
\begin{equation}
 Z = \sqrt{\idty - R^*R}\, P\, \sqrt{\idty - R^*R}
\label{eqfm}
\end{equation}
with $P$ a projection operator. The same construction that applies for quasi-free states allows to decompose any quasi-free CPTP map into a mixture of extreme quasi-free CPTP maps. Therefore the extreme points of the convex hull of quasi-free CPTP maps consists precisely of the maps of the form~(\ref{eqfm}). 

Finally, a quasi-free CPTP map $\Phi_{R,Z}$ is bistochastic iff
\begin{equation}
 Z = {\textstyle\frac{1}{2}}\, (\idty - R^*R) \ . 
\end{equation}
This follows from the invariance of the tracial state which has symbol $\frac{1}{2}\, \idty$. As such a map is fully determined by a single $N$-dimensional matrix $R$ with $\norm R \le 1$, we simply write $\Phi_R$.

\subsection{Entropies of maps and states}

To any normalized probability vector $P$ of size $N$ we may associate its \co{Shannon entropy}
\begin{equation}
 \s H(P) := - \sum_{i=1}^N p_i \ln p_i = \sum_{i=1}^N \eta(p_i) \ ,
\label{shannon}
\end{equation}
where we have introduced the function
\begin{equation}
 \eta(x) := -x \ln x \text{ for } x>0 
 \qquad\text{and}\qquad
 \eta(0) := 0 \ . 
\end{equation}
This entropy is a measure for the mixedness of a probability vector. In a similar way the degree of mixing of a quantum state $\rho$ is characterized by its \co{von~Neumann entropy}
\begin{equation}
 \s S(\rho) := - \tr \rho \ln \rho = \tr \eta(\rho) \ ,
\label{neumann}
\end{equation}
equal to the Shannon entropy of its spectrum. The entropy varies from zero for a pure state to $\ln N$ for the maximally mixed state, $\rho_* = \frac{1}{N}\, \idty$.

The density matrix $\varsigma = D_\Phi/N$ associated to a quantum stochastic map $\Phi$ depends on the basis that is used to compute the entries of $\Phi$, see~(\ref{dynmatr1}). A change of basis in $\c H_N$ corresponds to a unitary transformation of $D_\Phi$, therefore the eigenvalues of $D_\Phi$ don't change. It is then natural to consider the von~Neumann entropy of the bipartite state associated with $\Phi$ by the \jam\ isomorphism.

\begin{definition}
Let $\Phi$ be a trace-preserving completely positive map with associated Jamio\l kow\-ski state $\varsigma = \frac{1}{N}\, D_\Phi$. The \co{entropy of the map $\Phi$} is defined to be
\begin{equation}
 \s S(\Phi) 
 := \s S(\varsigma) 
 = \s S\bigl( {\textstyle\frac{1}{N}} D_\Phi \bigr) \ .
\label{neumann2}
\end{equation}
\end{definition}
Since the state $\varsigma = D_\Phi/N$ acts on the extended Hilbert space $\c H_N \otimes \c H_N$ the entropy of the map varies from zero for a unitary dynamics to $2 \ln N$ for a completely depolarizing channel $\Phi_*$ which sends any state to the maximally mixed state, $\Phi_*(\rho) = \rho_*$, see~\cite{ZB04}.
 
Let us now move to a classical discrete dynamics in the probability simplex. The following definition of entropy of a stochastic matrix introduced in~\cite{Sl02,ZKSS03}
\begin{equation}
 \s H_{\r I}(T) := \sum_{j=1}^N  p^{\r I}_j\, \s H(t_{j}) \ ,
\label{shanTI}
\end{equation}
is decorated by a label ``I'', as it is based on an invariant state of a matrix, $P^{\r I} = TP^{\r I}$. Here $t_j$ denotes the $j$--th column of a transition matrix $T$, so~(\ref{shanTI}) represents the average Shannon entropy of columns of $T$ weighted by its invariant state $P^{\r I}$.

To demonstrate a direct relation to the quantum dynamics we shall use a simplified version of the entropy of a transition matrix
\begin{equation}
 \s H(T) := - {\textstyle\frac{1}{N}}\, \sum_{i=1}^N  \sum_{j=1}^N T_{ij}
 \ln T_{ji} \ .
\label{shanT}
\end{equation}
Observe that for any bistochastic matrix the uniform vector is invariant, $P^{\r I} = P_*$, so $p^{\r I}_j = \frac{1}{N}$ and both definitions of entropy do coincide. Both quantities, $\s H_{\r I}(T)$ and $\s H(T)$, vary from zero to $\ln N$.

Using eq.~(\ref{tij}) one concludes that for any stochastic map $\Phi$ represented by a diagonal dynamical matrix $D_\Phi$ its entropy is up to a constant equal to the entropy of the associated stochastic matrix,
\begin{equation}
 \s S(\Phi) = \s H\bigl( T(\Phi)\bigr)+ \ln N \ . 
\label{sphist}
\end{equation}
The constant $\ln N$ is due to the $\frac{1}{N}$ normalization factor in front of the dynamical matrix, see definition~(\ref{neumann2}). Although in general the entropy of a quantum map belongs to $[0,2\ln N]$, the entropy of the maps represented by diagonal $D$ and corresponding to stochastic matrices vary from $\ln N$ to $2 \ln N$.  Among this class the minimal entropy characterizes the coarse graining map~(\ref{coarse}), for which $T(\Phi_{\r{CG}}) = \idty$, and $\s S(\Phi_{\r{CG}}) = \ln N$. The maximum is achieved for the completely depolarizing channel, $\Phi_*$, since $T(\Phi_*) = T_*$ so that $\s S(\Phi_{*}) = \s H(T_*) + \ln N = 2\ln N$.

The von~Neumann entropy of a Fermionic quasi-free state with symbol $Q$ can easily be expressed in terms of $Q$
\begin{equation}
 \s S^{\r{qf}}(Q) = \tr \bigl( \eta(Q) + \eta(\idty - Q) \bigr) \ .
\label{entqfs}
\end{equation}
Quite explicit expressions can be given for the entropies of extreme CPTP and bistochastic quasi-free maps using~(\ref{dmqf}) and~(\ref{eqfm}).

For an extreme CPTP quasi-free map, we obtain after some algebraic manipulations
\begin{align}
 \s S(\Phi_{R,Z}) 
 &= \s S^{\r{qf}} \bigl( {\textstyle\frac{1}{2}}
 (\idty + \abs R^2 - 2\abs R\, P\, \abs R) \bigr) 
\nonumber \\
 &= \tr \Bigl( \eta\bigl( {\textstyle\frac{1}{2}}
 (\idty + \abs R^2 - 2\abs R\, P\, \abs R) \bigr) 
 + \eta\bigl( {\textstyle\frac{1}{2}}
 (\idty - \abs R^2 + 2\abs R\, P\, \abs R) \bigr)\Bigr) \ .
\end{align}
In these expressions $P$ is a projector and $Z = \sqrt{\idty - \abs R^2}\, P\, \sqrt{\idty - \abs R^2}$. 

For a bistochastic map determined by $R$ with $\norm R \le 1$, we obtain
\begin{equation}
 \s S(\Phi_R)  
 = 2 \tr \Bigl( \eta\bigl( {\textstyle\frac{1}{2}}(\idty + \abs R) \bigr) 
 + \eta\bigl( {\textstyle\frac{1}{2}}(\idty - \abs R) \bigr) \Bigr) \ .
\end{equation}

\subsection{Entropy exchange and Lindblad's theorem}

Consider a CP map $\Phi$ represented in its canonical Kraus form~(\ref{Kraus}). For any state $\rho\in \c D_N$ define a positive operator $\hat\sigma = \hat\sigma(\Phi,\rho)$ acting on the extended Hilbert space $\c H_{N^2}$,
\begin{equation}
 \hat\sigma_{\alpha\beta} 
 := \tr \rho A_\beta^\dagger A_\alpha , \quad \alpha,\beta=1,\ldots, N^2 \ .
\label{osigma}
\end{equation}
If the map $\Phi$ is stochastic, then the operator $\hat\sigma$ is also normalized in the sense that
\begin{equation}
 \tr \hat\sigma 
 = \sum_{\alpha=1}^{N^2} \tr \rho A_\alpha^\dagger A_\alpha
 = \tr \rho = 1
\end{equation}
and so it represents a density operator in its own right, $\hat\sigma \in \c D_{N^2}$. In particular, if $\rho = \rho_* = \frac{1}{N}\, \idty$ then, using the canonical Kraus decomposition~(\ref{Kraus}), one shows that 
\begin{equation}
 \hat\sigma(\Phi,\rho_*) = \frac{1}{N}\, D_\Phi = \varsigma\; .
\end{equation} 
The von~Neumann entropy of $\hat\sigma$ depends on $\rho$, and equals $\s S(\Phi)$, as defined above, if $\rho$ is the maximally mixed state.

Auxiliary states $\hat\sigma$ in an extended Hilbert space were used by Lindblad to derive bounds for the entropy of the image $\rho' = \Phi(\rho)$ of an initial state under the action of a CPTP map. Lindblad's bounds~\cite{Li91}
\begin{equation}
 \abs{\s S(\hat\sigma) - \s S(\rho)} 
 \le \s S(\rho') 
 \le \s S(\hat\sigma) + \s S(\rho) \ ,
\label{boundent}
\end{equation}
are obtained by defining yet another density matrix in the composite Hilbert space $\c H_N \otimes \c H_M$
\begin{equation}
 \omega 
 := \sum_{\alpha=1}^M \sum_{\beta=1}^M A_\alpha \rho A_\beta^\dagger 
 \otimes |\alpha\>\<\beta| \ ,
 \label{omega}
\end{equation}
where $M = N^2$ and $\{|\alpha\>\}$ is an orthonormal basis in $\c H_M$. Computing partial traces one finds that 
\begin{equation}
 \tr_N \omega = \hat\sigma 
 \qquad\text{and}\qquad 
 \tr_M \omega = \rho' \ .
\end{equation} 
It is possible to verify that $\s S(\omega) = \s S(\rho)$, and so one arrives at~(\ref{boundent}) using subadditivity of the entropy and the triangle inequality~\cite{AL70}. These results can be obtained using the first part of the proof of Theorem~\ref{thm1}.

If the initial state is pure, that is if $\s S(\rho) = 0$, we find that the final state $\rho'$ has entropy $\s S(\hat\sigma)$. For this reason $\s S(\hat\sigma)$ was called the \co{entropy exchange of the operation $\Phi$} by Shumacher~\cite{Sc96}. In that work an alternative representation of the entropy exchange was given
\begin{equation}
 \s S\bigl( \hat\sigma(\Phi,\rho) \bigr) 
 = \s S\bigl( (\id \otimes \Phi) |\varphi\> \<\varphi| \bigr) \ ,
\label{osigma2}
\end{equation}
\noindent 
where $|\varphi\>$ is an arbitrary purification of the mixed state, $\tr_B |\varphi\> \<\varphi| = \rho$. To prove this useful relation it is enough to find a pure state in an extended Hilbert space, such that one of its partial traces gives $\hat \sigma$ and the other one the argument of the entropy function at the right hand side of~(\ref{osigma2}).

A kind of classical analogue of the quantum entropy bound~(\ref{boundent}) of Lindblad was proved later by S{\l}omczy{\'n}ski~\cite{Sl02}. He introduced the notion of entropy of a stochastic matrix $T$ with respect to some fixed stationary probability distribution $P =\{p_i\}_{i=1}^N$, 
\begin{equation}
 \s H_P(T) := \sum_{i=1}^N p_i\, \s H(\vec{t}_i) \,  \qquad 
 \vec{t}_i = (T_{1i}, T_{2i}, \dots , T_{Ni}) \ . 
\label{SPTsum}
\end{equation}
This quantity --- an average entropy of columns of matrix $T$ weighted by probability vector $P$ --- allows to obtain the bounds for the entropy of a classically transformed state, $P'=TP$,  
\begin{equation}
 \s H_P(T) \le \s H(P') \le \s H_{\r P}(T) + \s H(P) \ . 
\label{boundclass}
\end{equation}
These bounds look somewhat similar to the quantum result~(\ref{boundent}) of Lindblad, but a careful comparison is required. Applying the definition~(\ref{osigma}) of the auxiliary state $\hat \sigma$ to the classical case of a diagonal state, $\rho_{ii}=p_i \delta_{ij}$ and a diagonal dynamical matrix $D$ we find
\begin{equation}
 \s S(\sigma^{\r{class}}) = \s H_P(T) + \s H(P) \ , 
\label{ssigma}
\end{equation}
where $T = T(D^{\s R})$ is the classical stochastic matrix given by~(\ref{tij}). Substituting this result into the Lindblad bound~(\ref{boundent}), and renaming $\rho$ and $\rho'$ into $P$ and $P'$ we realize that the argument of the absolute value in the lower bound reduces to $\s H_P(T)$ and is not negative, so we arrive at
\begin{equation}
 \s H_P(T) \le \s H(P') \le \s H_P(T) + 2\s H(P) \ . 
\label{boundclass2}
\end{equation}
The lower bound coincides exactly with the result~(\ref{boundclass}) of S{\l}omczy{\'n}ski. The upper bound is weaker (note the presence of the term $2\s H(P)$ instead of $\s H(P)$), but it holds in general for all quantum maps, while~(\ref{boundclass}) is true for classical dynamics only.

\section{Composition of maps and composition of states}

We first recall some properties of the reshuffling transformation of a matrix as defined in~(\ref{dynmatr3}). Reshuffling does not preserve the spectrum nor the Hermiticity of a matrix. It is an involution, since performing this transformation twice returns the initial matrix, $(X^{\s R})^{\s R} = X$.

Using the \jam\ isomorphism the composition of maps acting on a single quantum system can be used to define a \co{composition} between quantum states of a bi-partite system, in fact, this composition extends to arbitrary matrices.

\begin{definition}
The reshuffling operation~(\ref{dynmatr3}) defines a composition between  arbitrary matrices $\sigma_1$ and $\sigma_2$ on a composite system  $\c H_N \otimes \c H_N$ 
\begin{equation}
 \sigma_1 \odot \sigma_2 
 := \bigl( \sigma_1^{\s R} \sigma_2^{\s R}\bigr)^{\s R}  \ .
\label{compos}
\end{equation}
\end{definition}

For stochastic matrices obtained by reshaping two diagonal density matrices $T_1 = T(\sigma_1^{\s R})$ and $T_2 = T(\sigma_2^{\s R})$ according to~(\ref{tij}), the composition of the diagonal states returns the usual multiplication of stochastic matrices, 
\begin{equation}
 T\bigl( (\sigma_1 \odot \sigma_2)^{\s R} \bigr)
 = T(\sigma_1^{\s R})\, T(\sigma_2^{\s R}) = T_1 T_2 \ .
\label{T1T2}
\end{equation}

Using the definition of reshuffling we see that generally $(D^{\s R})^2$ differs from $(D^2)^{\s R}$. Therefore the composition performed on two copies of a state $\sigma$ differs from its square 
\begin{equation}
 \sigma^{\odot 2} := \sigma \odot \sigma \ne \sigma^2  \ .
\label{sigma22}
\end{equation}

\subsection{Properties of the composition}

\begin{lemma}
Let $X$ and $Y$ denote two matrices of size $N^2$.
\begin{equation}
 \text{If } X \ge 0 \text{ and } Y\ge 0 \text{ then }
 X \odot Y = (X^{\s R} Y^{\s R})^{\s R} \ge 0 \ .
\label{reshlem}
\end{equation}
\end{lemma}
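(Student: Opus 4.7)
The plan is to reduce the positivity question for $X \odot Y$ to the standard fact that the composition of two completely positive maps is again completely positive, by applying Choi's theorem in both directions.

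First I would read the hypothesis through the \jam\ correspondence. Since $X$ is a positive matrix of size $N^2$, Choi's theorem (cited after~(\ref{Kraus})) says that there exists a unique super-operator $\Phi$ whose dynamical matrix equals $X$, namely $\Phi := X^{\s R}$ — this uses that reshuffling is an involution — and this $\Phi$ is completely positive. Applying the same argument to $Y \ge 0$ produces a CP super-operator $\Psi := Y^{\s R}$ with $D_\Psi = Y$.

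Next I would verify, at the level of the matrix realization~(\ref{dynmatr1}), that the ordinary $N^2 \times N^2$ matrix product $X^{\s R} Y^{\s R} = \Phi\,\Psi$ is precisely the matrix representation of the composed super-operator $\Phi \circ \Psi$ acting on vectorized density matrices. This is a short index check using Einstein's convention: with $\Phi_{\substack{m\mu\\n\nu}}$ acting as in~(\ref{dynmatr1}), iterating two such maps contracts indices exactly as matrix multiplication in lexicographic order. It follows that $\Phi\,\Psi$, being the composition of two CP maps, is itself CP.

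Finally I would apply Choi's theorem in the reverse direction: the dynamical matrix of the CP map $\Phi\,\Psi$ is positive, so
\begin{equation}
 X \odot Y = \bigl( X^{\s R}\, Y^{\s R}\bigr)^{\s R}
 = \bigl( \Phi\,\Psi \bigr)^{\s R}
 = D_{\Phi\,\Psi} \ge 0 \ ,
\end{equation}
which is the desired conclusion. The only subtle step is the middle one: one must make sure that the basis-dependent reshuffling and matrix product combine correctly so that $X^{\s R} Y^{\s R}$ really implements the super-operator composition in the convention of~(\ref{dynmatr1}), but once the index bookkeeping is done the rest is immediate from Choi's theorem.
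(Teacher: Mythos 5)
Your proposal is correct and follows essentially the same route as the paper: identify $X$ and $Y$ via Choi's theorem with the dynamical matrices of completely positive maps, observe that $X\odot Y$ is the dynamical matrix of their composition, and conclude positivity from the fact that a composition of CP maps is CP. You spell out the index bookkeeping and the two directions of Choi's theorem more explicitly than the paper does, but the argument is the same.
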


\begin{proof}
Denoting by $\Psi$ and $\Phi$ the completely positive maps with corresponding dynamical matrices $X$ and $Y$ we see that $X \odot Y$ is the dynamical  matrix of the composed map $\Psi \circ \Phi$. Since the composition of two completely positive maps yields again a completely positive map~\cite{BZ06}, we infer~(\ref{reshlem}).
\end{proof}

\begin{proposition}
The set of all operators acting on a composite Hilbert space $\c H_N \otimes \c H_N$, equipped with the composition law $(\odot)$, is a non-Abelian associative semi-group. Moreover, if $\sigma_1$ and $\sigma_2$ are Hermitian operators on $\c H_N \otimes \c H_N$ then also $\sigma_1 \odot \sigma_2$ is Hermitian. Therefore the set of all Hermitian operators on $\c H_N \otimes \c H_N$ is a non-Abelian associative subsemi-group.
\end{proposition}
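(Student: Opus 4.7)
The plan is to exploit the definition $\sigma_1 \odot \sigma_2 = (\sigma_1^{\s R}\sigma_2^{\s R})^{\s R}$ together with the involution property $(X^{\s R})^{\s R} = X$, which reduces every semi-group axiom for $\odot$ to the corresponding statement for ordinary matrix multiplication on $\c H_{N^2}$. Closure is automatic since reshuffling is a bijection of the matrix algebra; non-Abelianness is inherited from the non-commutativity of matrix product on $\c H_{N^2}$ (pick $X,Y$ with $XY\neq YX$ and set $\sigma_1:=X^{\s R}$, $\sigma_2:=Y^{\s R}$; then $\sigma_1\odot\sigma_2=(XY)^{\s R}\neq(YX)^{\s R}=\sigma_2\odot\sigma_1$).

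For associativity I would just unfold the definition, using $(X^{\s R})^{\s R}=X$ to collapse each inner reshuffle:
\begin{equation*}
 (\sigma_1\odot\sigma_2)\odot\sigma_3
 = \bigl((\sigma_1^{\s R}\sigma_2^{\s R})\,\sigma_3^{\s R}\bigr)^{\s R}
 = \bigl(\sigma_1^{\s R}(\sigma_2^{\s R}\sigma_3^{\s R})\bigr)^{\s R}
 = \sigma_1\odot(\sigma_2\odot\sigma_3),
\end{equation*}
the middle equality being ordinary associativity of matrix product. The same manipulation yields the convenient telescoping formula $\sigma_1\odot\cdots\odot\sigma_k=(\sigma_1^{\s R}\cdots\sigma_k^{\s R})^{\s R}$.

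For the Hermiticity claim the most efficient route is to invoke the identity $D_{\Phi^\dagger}=(D_\Phi)^*$ already recorded in the excerpt, together with $\Phi^\dagger(\rho):=(\Phi(\rho^*))^*$. Because reshuffling is an involution and $D_\Phi=\Phi^{\s R}$, this identity says: an operator $\sigma$ on $\c H_N\otimes\c H_N$ is Hermitian iff the super-operator $\sigma^{\s R}$ is \emph{Hermiticity-preserving}, i.e.\ satisfies $\sigma^{\s R}=(\sigma^{\s R})^\dagger$. A one-line unpacking of the definition gives $(\Psi\circ\Phi)^\dagger=\Psi^\dagger\circ\Phi^\dagger$, so composing two Hermiticity-preserving super-operators again yields a Hermiticity-preserving super-operator. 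Reshuffling back, $\sigma_1\odot\sigma_2$ is Hermitian whenever $\sigma_1,\sigma_2$ are, and the Hermitian matrices form a subsemi-group. The only real obstacle, such as it is, lies in keeping the dictionary $\sigma\leftrightarrow\sigma^{\s R}$ straight; a reader preferring a direct calculation can instead expand $(\sigma_1\odot\sigma_2)_{mp,\mu\pi}=\sum_{n\nu}(\sigma_1)_{mn,\mu\nu}(\sigma_2)_{np,\nu\pi}$ and verify Hermiticity of the left-hand side by conjugating the right-hand side term by term, applying Hermiticity of each factor, and relabeling $n\leftrightarrow\nu$.
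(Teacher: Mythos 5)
Your proposal is correct and follows essentially the same route as the paper: both reduce the semi-group axioms to ordinary matrix multiplication (equivalently, composition of super-operators) via the involution $(X^{\s R})^{\s R}=X$, and both obtain Hermiticity of $\sigma_1\odot\sigma_2$ from the identity $D_{\Phi^\dagger}=(D_\Phi)^*$ combined with $(\Phi\circ\Psi)^\dagger=\Phi^\dagger\circ\Psi^\dagger$ and $D_{\Phi\circ\Psi}=D_\Phi\odot D_\Psi$. The only cosmetic differences are that you make the associativity computation and an explicit index-level Hermiticity check fully explicit, while the paper additionally notes (though the proposition does not require it) that $P_+$ is a neutral element since $(P_+)^{\s R}=\idty$.
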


\begin{proof}
Let $P_+$ denote the projector on the maximally entangled state. Since $(P_+)^{\s R} = \idty$, this operator plays the role of the neutral element of the composition. The composition $\odot$ is non-Abelian, $\sigma_1 \odot \sigma_2 \ne \sigma_2 \odot \sigma_1$, because the composition of quantum maps is not commutative. It is on the other hand associative $(\sigma_1 \odot \sigma_2) \odot \sigma_3 = \sigma_1 \odot (\sigma_2 \odot \sigma_3)$, since the composition of maps is.

It remains to prove is that $\sigma_1 \odot \sigma_2$ is Hermitian if $\sigma_1$ and $\sigma_2$ are. Let $\Phi$ be a super-operator associated with $D_\Phi$ through the \jam\ isomorphism, then $\bigl( D_\Phi \bigr)^* = D_{\Phi^\dagger}$. Moreover, for two super-operators $\Phi$ and $\Psi$ we have
\begin{equation}
 (\Phi \circ \Psi)^\dagger(X) 
 = \bigl( \Phi \circ \Psi (X^*) \bigr)^*
 = \bigl( \Phi (\Psi(X^*)) \bigr)^*
 = \Phi^\dagger \bigl( \bigl( \Psi(X^*) \bigr)^* \bigr)
 = \bigl( \Phi^\dagger \circ \Psi^\dagger \bigr)(X)
\end{equation} 
Using
\begin{equation}
 D_{\Phi \circ \Psi} = D_\Phi \odot D_\Psi
\end{equation}
finishes the proof.
\end{proof}

Restricting our attention to the set $\c D_{N^2}$ of quantum states on a composite system, we see that this algebraic structure breaks down since the trace condition, $\tr\sigma = 1$, is not preserved under composition. However, one may overcome this difficulty by selecting a certain subset of quantum states. Thus consider the subset $\c D^{\r I}_{N^2}$ of density matrices of a composite system $\c H_A \otimes \c H_B$ of size $N^2$ such that their partial trace over the first system is the maximally mixed state
\begin{equation}
 \c D^{\r I}_{N^2} :=  \{ \sigma \in \c D_{N^2} \,:\, 
 \tr_A \sigma = {\textstyle\frac{1}{N}}\, \idty \} \ .
\label{substoch}
\end{equation}
With respect to the \jam\ isomorphism these states  correspond to the trace preserving maps. Since a composition of any two trace preserving maps preserves the trace we infer that the composition~(\ref{compos}) acts internally in the set $\c D^{\r I}_{N^2}$.

\begin{proposition}
The set of all operators acting on the Hilbert space $\c H_N$ of a bipartite system such that their left marginal is proportional to the identity, equipped with the composition $\odot$ is a non-Abelian associative semi-group.
\end{proposition}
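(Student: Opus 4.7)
The proof reduces to verifying three properties for the subset $\c S := \{\sigma \text{ on } \c H_N \otimes \c H_N : \tr_A \sigma \propto \idty\}$ equipped with $\odot$: closure, associativity, and non-commutativity. Associativity and non-commutativity come for free from the preceding Proposition, since $\c S$ sits inside the semi-group of all operators on $\c H_N \otimes \c H_N$; only closure under $\odot$ requires a new argument. Note that $P_+ \in \c S$ since $\tr_A P_+ = \frac{1}{N}\idty$, so the identity element of the ambient semi-group is retained as well.

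The plan for closure is to translate the partial-trace condition into a statement about the associated super-operator and then exploit functoriality of composition. Concretely, I would first record the elementary dictionary: for a super-operator $\Phi$ with dynamical matrix $D_\Phi$, the condition $\tr_A D_\Phi = c\,\idty$ is equivalent to $\tr \Phi(\rho) = c\,\tr\rho$ for every $\rho$. This is the generalisation of~(\ref{partrace}) from $c=1$ to arbitrary $c$, obtained by exactly the same index computation: $\tr \Phi(\rho) = \sum_m \Phi_{mm,n\nu}\,\rho_{n\nu}$ equals $c\,\tr \rho$ for all $\rho$ precisely when $\sum_m \Phi_{mm,n\nu} = c\,\delta_{n\nu}$, i.e.\ $\tr_A D_\Phi = c\,\idty$.

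Granted this dictionary, closure is immediate. Suppose $\tr_A \sigma_i = c_i\,\idty$ for $i=1,2$ and set $\Phi_i := \sigma_i^{\s R}$, so that $\sigma_i = D_{\Phi_i}$ by involutivity of the reshuffling. By the dictionary, each $\Phi_i$ rescales the trace by $c_i$, hence $\Phi_1 \circ \Phi_2$ rescales the trace by $c_1 c_2$. Applying the dictionary in the other direction yields $\tr_A D_{\Phi_1 \circ \Phi_2} = c_1 c_2\,\idty$. Since the preceding Proposition identifies $D_{\Phi_1 \circ \Phi_2} = \sigma_1 \odot \sigma_2$, the composition belongs to $\c S$ and closure is established.

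The only genuine obstacle is keeping the reshuffling conventions straight when verifying the dictionary; apart from that the proof contains no content beyond the two observations that the left partial trace on the dynamical-matrix side corresponds to the trace functional on the super-operator side, and that composition of super-operators multiplies the trace-rescaling constants.
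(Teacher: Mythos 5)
Your proof is correct and follows essentially the same route as the paper, which gives no separate argument for this Proposition but relies on the immediately preceding observation that, under reshuffling, these operators correspond to trace-preserving (more generally, trace-rescaling) super-operators and that composition preserves this property; your dictionary $\tr_A D_\Phi = c\,\idty \iff \tr\Phi(\rho) = c\,\tr\rho$ is exactly the generalisation of~(\ref{partrace}) needed for the ``proportional to the identity'' formulation, and the multiplicativity of the constants $c_1 c_2$ settles closure. The one small imprecision is that non-commutativity is not literally inherited by a sub-semi-group (a sub-semi-group of a non-Abelian semi-group may well be Abelian), so strictly you should exhibit two non-commuting elements of your set $\c S$ --- e.g.\ the dynamical matrices of two non-commuting unitary conjugations, which even lie in $\c D^{\r{II}}_{N^2}$ --- but this is a one-line addition and the paper is no more explicit on this point.
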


It is convenient to distinguish another composition sub-algebra by defining the set of states with both marginals proportional to the identity,
\begin{equation}
 \c D^{\r{II}}_{N^2} := \{ \sigma \in \c D^{\r I}_{N^2} \,:\, 
 \tr_B \sigma = {\textstyle\frac{1}{N}}\, \idty \} \ .
\label{subbist}
\end{equation}
Due to condition~(\ref{partraceB}) this semi-group is generated by compositions of bistochastic maps. 

\subsection{Idempotent states}

Consider the state $\sigma$ of a bipartite system obtained by extending an arbitrary state $\rho$ by the maximally mixed state
\begin{equation}
 \sigma := {\textstyle\frac{1}{N}}\, \rho \otimes \idty \ .
\label{extmix}
\end{equation}
This state is  proportional to the dynamical matrix $D$ of the operation $\Phi_\rho$, which acts as a complete single-step contraction, sending any initial state $\omega$ into $\rho$,
\begin{equation}
 \sigma = {\textstyle\frac{1}{N}}\, D_{\Phi_\rho}\ ,
 \qquad\text{where}\qquad
 \Phi_\rho(\omega) := \rho
\label{extend2}
\end{equation}
for any $\omega \in \c D_N$. To show this let us start with the dynamical matrix of this map, $D_{\substack{mn\\\mu\nu}} = \rho_{m\mu}\, \delta_{n\nu}$. Writing out the matrix entries of 
\begin{equation}
 \omega' := \Phi_\rho(\omega) = D^{\s R} \omega = (\rho \otimes
 \idty)^{\s R} \omega
\end{equation}
in the standard basis we obtain the desired result
\begin{equation}
 \omega'_{m\mu} = D_{\substack{mn\\\mu\nu}}\, \omega_{n\nu} =
 \rho_{m\mu} (\tr \omega) =  \rho_{m\mu}\ .
\end{equation}

A state $\sigma$ of an algebra is called \co{idempotent} if $\sigma \odot \sigma = \sigma$. Hence any state of the form~(\ref{extmix}) is idempotent, since the single-step contracting map $\Phi_\rho$ applied a second time does not influence the system anymore, $\Phi_\rho \circ \Phi_\rho = \Phi_\rho$.

\subsection{Composing quasi-free quantum maps}

The \jam\ state associated to a quasi-free quantum map $\Phi_{R,Z}$ has the symbol
\begin{equation}
 0 \le {\textstyle\frac{1}{2}}\, \begin{pmatrix} \idty&R\\R^*&R^*R + 2 Z\end{pmatrix}
 \le \idty \ .
\label{sym}
\end{equation}
When $R$ and $Z$ vary through the matrices of dimension $N$ subject to the constraint $0 \le Z \le \idty - R^*R$ the symbol~(\ref{sym}) varies trough the matrices of dimension $2N$ in the interval $(0,\idty)$ which are of the form
\begin{equation}
 {\textstyle\frac{1}{2}}\, \begin{pmatrix} \idty&R_1\\R_1^*&R_2\end{pmatrix} \ .
\label{sym2}
\end{equation}
Remark that this set is convex. Composing quasi-free quantum maps induces now the following composition law $\odot$ on these matrices
\begin{align}
 &{\textstyle\frac{1}{2}}\, \begin{pmatrix} \idty&R_1 \\ R_1^*&R_2 \end{pmatrix}
 \odot {\textstyle\frac{1}{2}}\, \begin{pmatrix} \idty&S_1 \\S_1^*&S_2 
 \end{pmatrix}
 =: {\textstyle\frac{1}{2}}\, \begin{pmatrix} \idty&T_1 \\T_1^*&T_2 \end{pmatrix} 
\nonumber \\
\intertext{with}
 &T_1 = R_1 S_1
 \qquad\text{and}\qquad
 T_2 = S_1^* (R_2 - \idty) S_1 + S_2 \ . 
\label{qfodot} 
\end{align}
This composition law is the analogue of~(\ref{compos}). It is associative and non-commutative but only affine in its first argument. 

\section{Entropy of a composition}

In this section we analyze the behaviour of the entropy of a quantum operation under composition. The bounds~(\ref{boundclass}) on the increase of entropy of probability vectors under discrete dynamics allowed S\l omczy\'nski to prove the subadditivity relation~\cite{Sl02,ZKSS03}
\begin{equation}
 \s H_{\r I}(T_1) 
 \le \s H_{\r I}(T_2 T_1) 
 \le \s H_{\r I}(T_1) + \s H_{\r I}(T_2)  
\label{addclass1}
\end{equation}
provided both stochastic matrices $T_1$ and $T_2$ have the same invariant state, $P^{\r I}_1 = P^{\r I}_2$. Restricting our attention to the case of bistochastic matrices for which $P^{\r I}_1 = P^{\r I}_2 = P_* = \{\frac{1}{N}, \ldots, \frac{1}{N}\}$ we use~(\ref{shanT}) instead of~(\ref{shanTI}) and drop the label ``I'' in the subadditivity relation to get
\begin{equation}
 \s H(T_1) \le \s H(T_2 T_1) \le \s H(T_1) + \s H(T_2) \ . 
\label{addclass2}
\end{equation}
Considering a product of three bistochastic matrices S\l omczy\'nski proved~\cite{Sl02} a strong subadditivity relation for classical dynamics,
\begin{equation}
 \s H(T_3 T_2 T_1) + \s H(T_2 ) 
 \le \s H(T_3 T_2) + \s H(T_2 T_1) \ . 
\label{addclass3}
\end{equation}

\subsection{Dynamical subadditivity and strong subadditivity for bistochastic maps}

Motivated by the classical results above we formulate and prove their quantum counterparts.

\begin{theorem}[Dynamical subadditivity for bistochastic quantum operations]
\label{thm1} 
Let $\Phi_1$ be a bistochastic quantum operation and $\Phi_2$ a general stochastic quantum map then their entropies satisfy the subadditivity inequality 
\begin{equation}
 \s S(\Phi_2 \circ \Phi_1)  \le \s S(\Phi_1) + \s S(\Phi_2)  \ . 
\label{addquant1}
\end{equation}
If both $\Phi_1$ and $\Phi_2$ are bistochastic then
\begin{equation}
 \max\bigl( \{\s S(\Phi_1), \s S(\Phi_2) \} \bigr) 
 \le \min\bigl( \{\s S(\Phi_1 \circ \Phi_2), 
 \s S(\Phi_2 \circ \Phi_1) \} \bigr) \ . 
\label{addquant3}
\end{equation}
An equivalent statement of~(\ref{addquant3}) is the \co{triangle inequality for composition}
\begin{equation}
 \max\bigl( \{\s S(\sigma_1), \s S(\sigma_2)\} \bigr) 
 \le \min\bigl( \{\s S(\sigma_2 \odot \sigma_1), 
 \s S(\sigma_1 \odot \sigma_2) \} \bigr) 
 \le \s S(\sigma_1) + \s S(\sigma_2)\ ,  
\label{addquant1b}
\end{equation}
where $\sigma_1,\sigma_2 \in \c D^{\r{II}}_{N^2}$ and the set $\c D^{\r{II}}_{N^2}$ has been defined in~(\ref{subbist}).
\end{theorem}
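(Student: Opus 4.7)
The plan is to build one pure state on four subsystems whose marginals carry all the entropies appearing in the statement, derive (\ref{addquant1}) from ordinary subadditivity of the von~Neumann entropy, and then derive (\ref{addquant3}) from the fact that bistochastic maps do not decrease entropy. Writing the canonical Kraus forms as $\Phi_i(\rho) = \sum_\alpha A^{(i)}_\alpha \rho A^{(i)\dagger}_\alpha$, let $V_i := \sum_\alpha A^{(i)}_\alpha \otimes |\alpha\>_{E_i}$ be the Stinespring isometries mapping $\c H_N$ into $\c H_N \otimes \c H_{E_i}$. With $A$, $R$ denoting two copies of $\c H_N$ and $|\psi^+\>_{AR}$ the maximally entangled vector, I consider the pure state $|\Psi\>_{ARE_1E_2} := (V_2)_A (V_1)_A |\psi^+\>_{AR}$, in which $V_1$ and $V_2$ act successively on the $A$-factor, and write $\omega_X$ for the reduction of $|\Psi\>\<\Psi|$ to subsystem $X$.

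The crucial marginals are identified as follows. Using $(\Phi \otimes \id)(P_+) = \varsigma_\Phi$ proved in the paper, $\omega_{AR} = \varsigma_{\Phi_2 \circ \Phi_1}$, and by purity of $|\Psi\>$, $\s S(\omega_{E_1E_2}) = \s S(\omega_{AR}) = \s S(\Phi_2 \circ \Phi_1)$. Since $\tr_R|\psi^+\>\<\psi^+| = \rho_*$, definition~(\ref{osigma}) gives $\omega_{E_1} = \hat\sigma(\Phi_1, \rho_*) = \varsigma_{\Phi_1}$, whence $\s S(\omega_{E_1}) = \s S(\Phi_1)$. A parallel computation --- pushing the $A$-partial trace through the trace-preserving $V_2$ --- gives $\omega_{E_2} = \hat\sigma(\Phi_2, \Phi_1(\rho_*))$, which collapses to $\varsigma_{\Phi_2}$ precisely because $\Phi_1$ is bistochastic and so $\Phi_1(\rho_*) = \rho_*$; hence $\s S(\omega_{E_2}) = \s S(\Phi_2)$. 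Ordinary subadditivity on the bipartition $E_1 \cup E_2$ then delivers~(\ref{addquant1}):
\[
\s S(\Phi_2 \circ \Phi_1) = \s S(\omega_{E_1E_2}) \le \s S(\omega_{E_1}) + \s S(\omega_{E_2}) = \s S(\Phi_1) + \s S(\Phi_2).
\]

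For (\ref{addquant3}) assume both maps bistochastic. The \jam\ identity $\varsigma_{\Phi_2 \circ \Phi_1} = (\Phi_2 \otimes \id)(\varsigma_{\Phi_1})$, combined with bistochasticity of $\Phi_2 \otimes \id$ and the standard fact that a unital CPTP map never decreases the von~Neumann entropy (a direct consequence of monotonicity of the relative entropy with reference state $\rho_*$), yields $\s S(\Phi_2 \circ \Phi_1) \ge \s S(\Phi_1)$. The complementary bound $\s S(\Phi_2 \circ \Phi_1) \ge \s S(\Phi_2)$ follows from the spectral identity $D_{\Phi^\dagger} = D_\Phi^*$, which forces $\s S(\Phi) = \s S(\Phi^\dagger)$: indeed $\s S(\Phi_2 \circ \Phi_1) = \s S(\Phi_1^\dagger \circ \Phi_2^\dagger) \ge \s S(\Phi_2^\dagger) = \s S(\Phi_2)$, the middle step being the same data-processing argument applied to the bistochastic map $\Phi_1^\dagger$. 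Running the argument with $\Phi_1, \Phi_2$ interchanged gives the analogous bounds for $\s S(\Phi_1 \circ \Phi_2)$ and completes (\ref{addquant3}). Finally (\ref{addquant1b}) is just (\ref{addquant1}) and (\ref{addquant3}) translated through the \jam\ isomorphism, since every $\sigma \in \c D^{\r{II}}_{N^2}$ is the \jam\ state of a bistochastic map and $\odot$ corresponds to composition of maps.

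The genuinely delicate point is the identification of $\omega_{E_2}$ with $\varsigma_{\Phi_2}$: it is exactly here that bistochasticity of $\Phi_1$ is used, through $\Phi_1(\rho_*) = \rho_*$. Without it the $E_2$-marginal is only $\hat\sigma(\Phi_2, \Phi_1(\rho_*))$, whose entropy is generally different from $\s S(\Phi_2)$, and subadditivity no longer produces the clean bound~(\ref{addquant1}). A second conceivable route to~(\ref{addquant3}), using strong subadditivity directly on a tripartition of $|\Psi\>$, also works but only after verifying extra auxiliary marginals such as $\omega_{AE_2}$ and $\omega_{RE_1}$ have entropy exactly $\ln N$, which again forces bistochasticity of $\Phi_1$; the adjoint-plus-data-processing shortcut above is cleaner and makes the role of bistochasticity in each summand transparent.
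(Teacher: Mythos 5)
Your proof is correct in substance. The upper bound~(\ref{addquant1}) follows essentially the paper's own route: you build the same Lindblad-type coupling (the paper's $\omega_{21}$ of~(\ref{wromega})--(\ref{stop})), identify the $E_1$ and $E_2$ marginals with $\varsigma_{\Phi_1}$ and $\hat\sigma(\Phi_2,\Phi_1(\rho_*))=\varsigma_{\Phi_2}$ using bistochasticity of $\Phi_1$ exactly where the paper does, and apply subadditivity on $E_1E_2$; the only cosmetic difference is that you carry a purifying party $R$ and use purity to equate $\s S(\omega_{E_1E_2})$ with $\s S(\omega_{AR})=\s S(\Phi_2\circ\Phi_1)$, whereas the paper computes $\tr_{\c H_N}\omega_{21}=\frac1N D_{\Phi_2\circ\Phi_1}$ directly. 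For the lower bound~(\ref{addquant3}) you genuinely depart from the paper: the paper obtains $\s S(\Phi_1)\le\s S(\Phi_2\circ\Phi_1)$ as the bistochastic specialization of Theorem~\ref{thm2}, whose proof runs through strong subadditivity~(\ref{wrssa}) on the purified four-party state. You instead use the covariance identity $\varsigma_{\Phi_2\circ\Phi_1}=(\Phi_2\otimes\id)(\varsigma_{\Phi_1})$ together with entropy non-decrease under unital CPTP maps, and then an adjoint symmetry to get the complementary bound $\s S(\Phi_2)\le\s S(\Phi_2\circ\Phi_1)$. This is more elementary (no strong subadditivity needed for the bistochastic case) and has the virtue of making explicit how the second half of the $\max$ in~(\ref{addquant3}) is obtained, a point the paper leaves implicit.

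One caveat on that adjoint step: the paper's adjoint $\Phi^\dagger(\rho)=\bigl(\Phi(\rho^*)\bigr)^*$ does \emph{not} reverse the order of composition --- the paper itself proves $(\Phi\circ\Psi)^\dagger=\Phi^\dagger\circ\Psi^\dagger$ --- so the identity $D_{\Phi^\dagger}=D_\Phi^*$ you quote cannot be combined with $(\Phi_2\circ\Phi_1)^\dagger=\Phi_1^\dagger\circ\Phi_2^\dagger$. What you need is the Hilbert--Schmidt dual (Heisenberg-picture) map $\Phi^*$, for which the order \emph{is} reversed, $\Phi^*$ is again bistochastic when $\Phi$ is, and $D_{\Phi^*}=S\,D_\Phi\,S$ with $S$ the swap on $\c H_N\otimes\c H_N$, so that $\s S(\Phi^*)=\s S(\Phi)$ still holds. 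With that substitution your chain $\s S(\Phi_2\circ\Phi_1)=\s S(\Phi_1^*\circ\Phi_2^*)\ge\s S(\Phi_2^*)=\s S(\Phi_2)$ is valid and the argument closes.
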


\begin{proof}
$i)$ We first show the upper bound~(\ref{addquant1}).

Let $\Theta$ be a stochastic quantum operation, generally not bistochastic, with Kraus form
\begin{equation} 
 \Theta(\rho) = \sum_{\alpha=1}^K C_\alpha \rho\, C_\alpha^\dagger \ .
\end{equation}
Introduce a map from $\c H_N$ to $\c H_N \otimes \c H_N \otimes \c H_K$ as follows: fix an orthonormal basis $\{|\alpha\>\}$ in $\c H_K$ and let
\begin{equation}
 F\varphi := \sum_{\alpha=1}^K \bigl( C_\alpha\varphi \bigr) \otimes|\alpha\>\ ,
\end{equation}
The adjoint map acts as
\begin{equation}
 F^\dagger \varphi \otimes |\alpha\> = C_\alpha^\dagger \varphi
\end{equation}
and one checks that
\begin{equation}
 F^\dagger F \varphi 
 = F^\dagger \sum_{\beta=1}^K \bigl( C_\beta \varphi \bigr) \otimes |\beta\>
 = \sum_{\beta=1}^K C_\beta^\dagger C_\beta \varphi
 = \varphi \ ,
\end{equation}
since $\Theta$ is trace preserving. 
Therefore $F$ is an isometry. It follows in particular that, for an arbitrary $N$-dimensional matrix $\rho$, $F\,\rho\,F^\dagger$ and $\rho$ have up to multiplicities of zero the same eigenvalues.   

Using $F$ we can express the Lindblad operator $\omega$, see~(\ref{omega}), as
\begin{equation}
 \omega 
 = F\, \rho F^\dagger
 = \sum_{\alpha,\beta=1}^K C_\alpha \rho\, C_\beta^\dagger \otimes 
 |\alpha\> \<\beta| \ .
\label{wromega}
\end{equation}
The operator $\omega$ is a density matrix on the composite system $\c H_N \otimes \c H_K$. If the initial state is maximally mixed, $\rho=\rho_{*}$, taking the partial trace over the first subsystem we obtain the following density matrix of the ancilla $E = \c H_K$, 
\begin{equation}
 \rho
 = \tr_N \omega
 = \hat\sigma(\Theta,\rho_*) 
 = \varsigma
 = \frac{1}{N}\, D_\Theta \ . 
\end{equation}

We perform the construction of above for the composition $\Phi_2 \circ \Phi_1$. We first write both quantum operations $\Phi_1$ and $\Phi_2$ in their Kraus forms
\begin{equation} 
 \Phi_1(\rho) = \sum_{\alpha=1}^M A_\alpha \rho\, A_\alpha^\dagger 
 \qquad\text{and}\qquad 
 \Phi_2(\rho) = \sum_{\alpha=1}^M B_\alpha \rho\, B_\alpha^\dagger
\end{equation}
where $M = N^2$. Putting $\Theta = \Phi_2 \circ \Phi_1$ we consider its Kraus decomposition
\begin{equation}
 C_{\alpha_2\alpha_1} := B_{\alpha_2} A_{\alpha_1}
\end{equation}
implying that $K = N^2 \times N^2 = N^4$. Consider a state $\omega_{21}$ acting on an extended tripartite space $\c H_N \otimes \c H_M \otimes \c H_M = \c H_N \otimes E_2 \otimes E_1$ 
\begin{align}
 \omega_{21} 
 &= \sum_{\alpha_1, \alpha_2, \beta_1, \beta_2 = 1}^N 
 C_{\alpha_2\alpha_1} \rho\, C_{\beta_2\beta_1}^\dagger \otimes
 |\alpha_2 \otimes \alpha_1\> \<\beta_2 \otimes \beta_1| 
\nonumber \\
 &= \sum_{\alpha_1, \alpha_2, \beta_1, \beta_2 = 1}^N 
  B_{\alpha_2} A_{\alpha_1} \rho\,  \bigl( B_{\beta_2} A_{\beta_1} \bigr)^\dagger 
 \otimes |\alpha_2 \otimes \alpha_1\> \<\beta_2 \otimes \beta_1| 
\label{stop}
\end{align}
and let $\rho_{21}$ be the restriction of $\omega_{21}$ to the ancilla $E_2 E_1 = \c H_M \otimes \c H_M$. We also compute the restrictions $\rho_1$ and $\rho_2$ to the first and second ancilla. Assuming that $\rho=\rho_{*}=\frac{1}{N}\idty$ we obtain:
\begin{align}
 &\rho_{12} 
 = \frac{1}{N}\, D_{\Phi_2 \circ \Phi_1} \\
 &\rho_1 
 = \tr_{\c H_N \otimes E_2} \omega_{21} 
 = \tr_{\c H_N} \frac{1}{N} \sum_{\alpha_1,\beta_1 = 1}^N 
   A_{\alpha_1} \idty A_{\beta_1}^\dagger \otimes
  |\alpha_1\> \<\beta_1|
 = \frac{1}{N}\, D_{\Phi_1} \\
 &\rho_2
 = \tr_{\c H_N \otimes E_1} \omega_{21} 
 = \tr_{\c H_N} \frac{1}{N} \sum_{\alpha_2,\beta_2 = 1}^N 
   B_{\alpha_2} \idty B_{\beta_2}^\dagger \otimes
  |\alpha_2\> \<\beta_2|
 = \frac{1}{N}\, D_{\Phi_2}\label{rhodwa} \ .
\end{align}
The last inequality holds because $\Phi_1$ is bistochastic. The upper bound of~(\ref{addquant1}) now follows by applying subadditivity of the entropy to the state $\rho_{12}$, see~\cite{ohypet,BZ06}.

\medskip
$ii)$ The lower bound is a special case of the bound in Theorem~\ref{thm2}. As $\Phi_1$ and $\Phi_2$ are bistochastic
\begin{equation}
 \Phi_1(\rho_*) = \Phi_2(\rho_*) = \rho_*
\end{equation}
and the contribution of the terms within the square brackets vanishes.
\end{proof}

\begin{corollary}
 Let $\Phi$ be a bistochastic map, then
\begin{equation}
 \s S\bigl( \Phi^{\circ n} \bigr) \le n\, \s S(\Phi)\ .
\label{cor}
\end{equation}
\end{corollary}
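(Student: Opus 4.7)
The plan is a short induction on $n$ using Theorem~\ref{thm1}, whose upper bound~(\ref{addquant1}) is already in hand. The base case $n=1$ is trivial. For the inductive step, I would write $\Phi^{\circ(n+1)} = \Phi \circ \Phi^{\circ n}$ and apply the subadditivity~(\ref{addquant1}) with $\Phi_1 = \Phi^{\circ n}$ and $\Phi_2 = \Phi$. To invoke~(\ref{addquant1}) I need the inner map $\Phi_1 = \Phi^{\circ n}$ to be bistochastic, which follows at once from the fact (noted just after~(\ref{partraceB})) that the composition of bistochastic maps is bistochastic; so an immediate sub-induction gives $\Phi^{\circ n}$ bistochastic for all $n \ge 1$.

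Combining the two ingredients,
\begin{equation}
 \s S\bigl( \Phi^{\circ(n+1)} \bigr)
 = \s S\bigl( \Phi \circ \Phi^{\circ n} \bigr)
 \le \s S\bigl( \Phi^{\circ n} \bigr) + \s S(\Phi)
 \le n\, \s S(\Phi) + \s S(\Phi)
 = (n+1)\, \s S(\Phi),
\end{equation}
where the first inequality is~(\ref{addquant1}) and the second is the induction hypothesis. This closes the induction and yields~(\ref{cor}).

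There is no real obstacle: the entire content of the corollary is packaged in Theorem~\ref{thm1}, and the only thing to check carefully is that bistochasticity is preserved along the iteration, so that~(\ref{addquant1}) can be applied at every step with its hypothesis satisfied (the theorem requires $\Phi_1$ to be bistochastic and $\Phi_2$ stochastic, both of which hold here for $\Phi$ bistochastic). No computation with Kraus operators, dynamical matrices, or the auxiliary state $\omega$ is needed beyond what is already used to establish Theorem~\ref{thm1}.
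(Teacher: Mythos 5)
Your induction is correct and is exactly the argument the paper intends: the corollary is stated without proof as an immediate iteration of the upper bound~(\ref{addquant1}), and your only substantive check --- that $\Phi^{\circ n}$ remains bistochastic so the theorem's hypothesis holds at each step --- is precisely the point the paper relies on via its remark that a composition of bistochastic maps is bistochastic. Nothing further is needed.
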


It should be remarked that~(\ref{cor}) is only meaningful when $\s S(\Phi) \ll \ln N$ and $n$ is not too large as the entropy of a map is anyway bounded by $2 \ln N$. 

In a similar way one may analyze properties of a concatenation of three consecutive operations. Motivated by the inequality~(\ref{addclass3}) for the entropy of the products of three bistochastic matrices, we formulate its quantum mechanical counterpart.

\begin{theorem}[Strong dynamical subadditivity for bistochastic 
quantum operations] 
Let $\Phi_1$, $\Phi_2$ and $\Phi_3$ be bistochastic quantum operations then their entropies satisfy the inequality 
\begin{equation}
 \s S(\Phi_3 \circ \Phi_2 \circ \Phi_1 ) + \s S(\Phi_2) 
 \le \s S(\Phi_3 \circ \Phi_2 ) + \s S(\Phi_2 \circ \Phi_1 ) \ .
\label{strongb}
\end{equation}
This is equivalent to 
\begin{equation}
 \s S(\sigma_3 \odot \sigma_2 \odot \sigma_1) 
 + \s S(\sigma_2)
 \le \s S(\sigma_3 \odot \sigma_2) 
 + \s S(\sigma_2 \odot \sigma_1)\ ,  
\end{equation}
where $\sigma_1, \sigma_2, \sigma_3 \in \c D^{\r{II}}_{N^2}$.
\end{theorem}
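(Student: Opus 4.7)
The plan is to lift the purification construction used in the proof of Theorem~\ref{thm1} from two to three consecutive channels, producing a single density matrix whose marginals on nested subsystems reproduce the four \jam\ states appearing in~(\ref{strongb}). Strong subadditivity of the von~Neumann entropy will then yield the claim in one step.

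Concretely, I would fix Kraus decompositions $\Phi_1(\rho) = \sum_{\alpha_1} A_{\alpha_1}\rho A_{\alpha_1}^\dagger$, $\Phi_2(\rho) = \sum_{\alpha_2} B_{\alpha_2}\rho B_{\alpha_2}^\dagger$ and $\Phi_3(\rho) = \sum_{\alpha_3} C_{\alpha_3}\rho C_{\alpha_3}^\dagger$, and form the Lindblad-type state~(\ref{omega}) for the composed operation $\Theta = \Phi_3 \circ \Phi_2 \circ \Phi_1$, with Kraus operators $C_{\alpha_3} B_{\alpha_2} A_{\alpha_1}$, at the maximally mixed input $\rho = \rho_* = \idty/N$:
\begin{equation}
 \omega_{321}
 := \frac{1}{N} \sum_{\substack{\alpha_1,\alpha_2,\alpha_3\\ \beta_1,\beta_2,\beta_3}} C_{\alpha_3} B_{\alpha_2} A_{\alpha_1} A_{\beta_1}^\dagger B_{\beta_2}^\dagger C_{\beta_3}^\dagger
 \otimes |\alpha_3\alpha_2\alpha_1\>\<\beta_3\beta_2\beta_1|.
\end{equation}
Let $\Omega := \tr_{\c H_N} \omega_{321}$ be its restriction to the tripartite ancilla $E_3 \otimes E_2 \otimes E_1$. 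By the identity $\hat\sigma(\Theta,\rho_*) = \frac{1}{N} D_\Theta$ recorded in Section~II.E one has $\Omega = \frac{1}{N} D_{\Phi_3 \circ \Phi_2 \circ \Phi_1}$.

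Next I would identify the smaller marginals. Tracing out $E_1$ invokes unitality of $\Phi_1$ via $\sum_{\alpha_1} A_{\alpha_1} A_{\alpha_1}^\dagger = \idty$ and leaves the surviving Kraus string $C_{\alpha_3} B_{\alpha_2}$, so $\Omega_{E_3 E_2} = \frac{1}{N} D_{\Phi_3 \circ \Phi_2}$. Tracing out $E_3$ invokes trace preservation of $\Phi_3$ via $\tr(C_{\alpha_3} X C_{\alpha_3}^\dagger) = \tr(C_{\alpha_3}^\dagger C_{\alpha_3}\, X)$ followed by summation, and leaves the Kraus string $B_{\alpha_2} A_{\alpha_1}$, so $\Omega_{E_2 E_1} = \frac{1}{N} D_{\Phi_2 \circ \Phi_1}$. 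Combining both contractions gives $\Omega_{E_2} = \frac{1}{N} D_{\Phi_2}$. Applying strong subadditivity to $\Omega$ with the partition $A = E_3$, $B = E_2$, $C = E_1$,
\begin{equation}
 \s S(\Omega_{E_3 E_2 E_1}) + \s S(\Omega_{E_2})
 \le \s S(\Omega_{E_3 E_2}) + \s S(\Omega_{E_2 E_1}),
\end{equation}
and inserting $\s S(\Phi) = \s S(D_\Phi/N)$ produces~(\ref{strongb}). The equivalent statement in terms of $\sigma_i \in \c D^{\r{II}}_{N^2}$ is immediate from $D_{\Phi \circ \Psi} = D_\Phi \odot D_\Psi$ and $\sigma_i = \frac{1}{N} D_{\Phi_i}$.

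I expect the main obstacle to be the careful bookkeeping of the four marginals: one must verify that elimination of the outer ancilla $E_1$ really produces the \jam\ state of $\Phi_3 \circ \Phi_2$ (using unitality of $\Phi_1$), that elimination of the outer ancilla $E_3$ really produces the \jam\ state of $\Phi_2 \circ \Phi_1$ (using trace preservation of $\Phi_3$), and that the middle marginal $\Omega_{E_2}$ is $\frac{1}{N}D_{\Phi_2}$ as required. Notably, bistochasticity of $\Phi_2$ itself plays no role in these contractions but only enters through its \jam\ state being the central marginal. Once the marginal identifications are in place, the proof reduces to a single invocation of Lieb--Ruskai strong subadditivity.
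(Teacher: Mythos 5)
Your proposal is correct and is essentially the paper's own proof: the authors likewise build the Lindblad state $\omega_{321}$ on $\c H_N \otimes E_3 \otimes E_2 \otimes E_1$ for the triple composition at input $\rho_*$, identify the marginals on $E_3E_2E_1$, $E_3E_2$, $E_2E_1$ and $E_2$ with the four \jam\ states, and conclude by Lieb--Ruskai strong subadditivity. Your bookkeeping of which hypothesis (unitality of $\Phi_1$ versus trace preservation of $\Phi_3$) enters each contraction is accurate and in fact slightly more careful than the paper's own remark.
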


\begin{proof}
Consider three bistochastic quantum operations $\Phi_1$, $\Phi_2$ and $\Phi_3$ acting in sequence on the maximally mixed state $\rho_*\in \c M_N$. We repeat the construction of the Linblad operator as in~(\ref{wromega}) but now for three ancillas $E_1$, $E_2$ and $E_3$ and obtain a density matrix $\omega_{321}$ acting on $\c H_N \otimes E_3 \otimes E_2 \otimes E_1$. The restrictions of $\omega_{321}$ to some of the ancillas will be denoted as before by $\rho$'s. We now write the strong subadditivity of quantum entropy for the system $E_3E_2E_1$~\cite{lierus}
\begin{equation}
 \s S(\rho_{321}) + \s S(\rho_2) \le \s S(\rho_{21}) + \s S(\rho_{32}) \ .
\label{wrssat}
\end{equation}
This yields, using the bistochasticity of $\Phi_3$ and $\Phi_2$
\begin{align}
 &\s S\bigl( \rho_2 \bigr) = \s S\bigl( \Phi_2 \bigr) \\
 &\s S\bigl( \rho_{21} \bigr) = \s S\bigl( \Phi_2 \circ \Phi_1 \bigr) \\
 &\s S\bigl( \rho_{32} \bigr) = \s S\bigl( \Phi_3 \circ \Phi_2 \bigr) \\
 &\s S\bigl( \rho_{321} \bigr) 
 = \s S\bigl( \Phi_3 \circ \Phi_2 \circ \Phi_1 \bigr) \ ,
\label{wrinequ3}
\end{align}
which ends the proof.
\end{proof}
 
\subsection{Generalization of dynamical subadditivity for stochastic maps}

Results obtained in previous section for bistochastic maps can be generalize for the case of arbitrary stochastic maps.
 
\begin{theorem}[Dynamical subadditivity for quantum operations]
\label{thm2} 
Let $\Phi_1$ and $\Phi_2$ be quantum operations then their 
entropies satisfy the inequalities 
\begin{align}
 \s S(\Phi_1) + \Delta_{1}
 \le \s S(\Phi_2 \circ \Phi_1) \le \s S(\Phi_{1}) + \s S(\Phi_{2}) + \Delta_{2}\label{addquant2}
\end{align}
where
\begin{align}
&\Delta_{1} = \s S\bigl(\Phi_2 \circ \Phi_1(\rho_*) \bigr) 
 - \s S\bigl(\Phi_1(\rho_*) \bigr)\ ,\\
&\Delta_{2} = \s S\Bigl(\hat\sigma\bigl(\Phi_{2}, \Phi_{1}(\rho_{*})\bigr)\Bigr)-\s S(\Phi_{2}) .
\end{align}
\end{theorem}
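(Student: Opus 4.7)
The plan is to extend the tripartite construction from the proof of Theorem~\ref{thm1} for the upper bound, and to switch to a purification-based auxiliary state for the lower bound. Each side of~(\ref{addquant2}) will follow from one standard entropy inequality: ordinary subadditivity for the upper bound and strong subadditivity (in the form of monotonicity of conditional entropy under local CPTP maps) for the lower bound.

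For the upper bound I would reuse the state $\omega_{21}$ of~(\ref{stop}) with input $\rho=\rho_{\ast}$, but now only assuming $\Phi_1$ is stochastic. The marginal on $E_2 E_1$ is still $\tfrac{1}{N} D_{\Phi_2\circ\Phi_1}$ and the marginal on $E_1$ is still $\tfrac{1}{N} D_{\Phi_1}$ (this only uses trace preservation of $\Phi_2$), but the marginal on $E_2$ is no longer $\tfrac{1}{N} D_{\Phi_2}$; a direct computation shows it equals
\[
 \rho_2 = \sum_{\alpha_2,\beta_2} B_{\alpha_2}\,\Phi_1(\rho_{\ast})\,B_{\beta_2}^{\dagger} \otimes |\alpha_2\rangle\langle\beta_2|,
\]
whose partial trace over $\c H_N$ is precisely $\hat\sigma\bigl(\Phi_2,\Phi_1(\rho_{\ast})\bigr)$ by~(\ref{osigma}). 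Because this state arises via the Stinespring isometry of $\Phi_2$, one has $\s S(\rho_2)=\s S\bigl(\hat\sigma(\Phi_2,\Phi_1(\rho_{\ast}))\bigr) = \s S(\Phi_2)+\Delta_{2}$. Subadditivity $\s S(\rho_{12}) \le \s S(\rho_1)+\s S(\rho_2)$ then yields the right-hand inequality of~(\ref{addquant2}).

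For the lower bound I would introduce the maximally entangled projector $P_+$ on $\c H_A\otimes\c H_N$ and set $\tau_1 := (\id_A\otimes\Phi_1)(P_+)$ and $\tau_{21} := (\id_A\otimes(\Phi_2\circ\Phi_1))(P_+) = (\id_A\otimes\Phi_2)(\tau_1)$. By the \jam\ isomorphism one has $\s S(\tau_1)=\s S(\Phi_1)$ and $\s S(\tau_{21})=\s S(\Phi_2\circ\Phi_1)$, while tracing out $\c H_A$ yields $\Phi_1(\rho_{\ast})$ and $\Phi_2\circ\Phi_1(\rho_{\ast})$ as the $\c H_N$-marginals. Strong subadditivity, equivalently the fact that the conditional entropy $\s S(A|B):=\s S(AB)-\s S(B)$ cannot decrease when a CPTP map acts on $B$, applied to $\id_A\otimes\Phi_2$ acting on $\tau_1$ produces
\[
 \s S(\tau_{21}) - \s S\bigl(\Phi_2\circ\Phi_1(\rho_{\ast})\bigr) \;\ge\; \s S(\tau_1) - \s S\bigl(\Phi_1(\rho_{\ast})\bigr),
\]
which is exactly the left-hand inequality of~(\ref{addquant2}).

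The main obstacle is conceptual: neither auxiliary state on its own delivers both bounds in the stated form. The Lindblad-type state $\omega_{21}$ couples $\Phi_1$ and $\Phi_2$ to separate ancillas, letting subadditivity see $\s S(\Phi_1)$ and $\s S(\Phi_2)+\Delta_{2}$ as marginal entropies, but it does not expose $\s S\bigl(\Phi_2\circ\Phi_1(\rho_{\ast})\bigr)$ as a marginal entropy. The \jam\ purification exposes $\Phi_1(\rho_{\ast})$ and $\Phi_2\circ\Phi_1(\rho_{\ast})$ as $\c H_N$-marginals but cannot simultaneously display $\s S(\Phi_2)+\Delta_2$. Recognising that the two inequalities need two different dilations---Stinespring-plus-Lindblad for the upper bound, purification-of-$\rho_{\ast}$ for the lower bound---is the key point; once that is in place the bookkeeping mimics the bistochastic case handled in Theorem~\ref{thm1}.
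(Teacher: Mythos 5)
Your proposal is correct. The upper bound is exactly the paper's argument: the same tripartite Lindblad state $\omega_{21}$, the same observation that only the $E_2$-marginal changes when $\Phi_1$ fails to be bistochastic, becoming $\hat\sigma\bigl(\Phi_2,\Phi_1(\rho_*)\bigr)$, followed by ordinary subadditivity on $E_2E_1$. (One small wording issue: the operator you display and call $\rho_2$ lives on $\c H_N\otimes E_2$; what enters subadditivity is its further partial trace over $\c H_N$, which \emph{is} $\hat\sigma\bigl(\Phi_2,\Phi_1(\rho_*)\bigr)$ by definition~(\ref{osigma}), so no appeal to the Stinespring isometry is needed for the entropy identity --- it holds by definition of $\Delta_2$.) For the lower bound you diverge from the paper: you apply monotonicity of the conditional entropy $\s S(A|B)$ under a CPTP map on $B$ to the Choi states $\tau_1=(\id_A\otimes\Phi_1)(P_+)$ and $\tau_{21}=(\id_A\otimes\Phi_2)(\tau_1)$, whereas the paper builds an explicit four-party pure state on $R\otimes\c H_N\otimes E_2\otimes E_1$ by purifying $\rho_*$ through both Stinespring dilations and invokes strong subadditivity for the triple $R,E_2,E_1$, using purity to trade $\s S(\rho_{RE_1})$ for $\s S\bigl(\Phi_1(\rho_*)\bigr)$ and $\s S(\rho_{RE_2E_1})$ for $\s S\bigl(\Phi_2\circ\Phi_1(\rho_*)\bigr)$. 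The two routes are equivalent --- your monotonicity statement is itself proved by dilating $\Phi_2$ and applying strong subadditivity, which is what the paper does by hand --- and indeed the paper remarks after the proof that the lower bound also follows from the Schumacher--Nielsen quantum data processing inequality, which is precisely your formulation. Your version is more compact and makes the data-processing content explicit; the paper's version is self-contained at the level of the Lieb--Ruskai inequality and keeps all four marginal entropies visible in one construction.
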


\begin{proof}
We repeat the proof of Theorem~\ref{thm1} for the composition  $\Phi_2 \circ \Phi_1$ up to the construction of the state $\omega$ on $\c H_N \otimes E_2 \otimes E_1$, see~(\ref{wromega}). If the first operation is not bistochastic then the restriction (\ref{rhodwa}) has the form
\begin{equation}
\rho_2
 = \tr_{\c H_N \otimes E_1} \omega_{21} 
 = \tr_{\c H_N} \sum_{\alpha_2,\beta_2 = 1}^N 
   B_{\alpha_2} \Phi_{1}(\rho_{*})\, B_{\beta_2}^\dagger \otimes
  |\alpha_2\> \<\beta_2|
 = \hat\sigma\bigl(\Phi_{2}, \Phi_{1}(\rho_{*})\bigr)\ .
\end{equation}
The restrictions $\rho_{12}$ and $\rho_{1}$ are the same as before. The entropy of $\hat\sigma$ forms the first term of $\Delta_{2}$ so subadditivity   of the entropy of $\rho_{12}$ implies the upper bound of (\ref{addquant2}). 

To prove the lower bound consider (\ref{stop}). As an initial state let us take $\rho_{*}=\frac{1}{N}\idty$ and purify it by adding an additional party $R$ of dimension $N$. This yields a pure state on $R \otimes \c H_N \otimes E_2 \otimes E_1$ defined by the normalized vector
\begin{equation}
 \frac{1}{N} \sum_{i=1}^N \sum_{\alpha_1,\alpha_2 = 1}^{N^2}
 |i\> \otimes \bigl( B_{\alpha_2}\, A_{\alpha_1} |i\> \bigr)
 \otimes |\alpha_2\> \otimes |\alpha_1\> \ .
\end{equation}
We now use the strong subadditivity of the entropy denoting by $\rho_R$ the restriction of this pure state to the party $R$ with similar notations for restrictions to other parties
\begin{equation}
 \s S(\rho_{RE_2E_1}) + \s S(\rho_{E_1}) 
 \le \s S(\rho_{RE_1}) + \s S(\rho_{E_2E_1})\ .
\label{wrssa}
\end{equation}
Computing all the terms appearing in~(\ref{wrssa}),
\begin{align}
 &\s S\bigl( \rho_{E_1} \bigr) = \s S\bigl( \Phi_1 \bigr) \\
 &\s S\bigl( \rho_{E_2E_1} \bigr) = \s S\bigl( \Phi_2 \circ \Phi_1 \bigr) \\
 &\s S\bigl( \rho_{RE_1} \bigr) = \s S\bigl( \Phi_1(\rho_*) \bigr) \\
 &\s S\bigl( \rho_{RE_2E_1} \bigr) = \s S\bigl( \Phi_2 \circ \Phi_1 (\rho_*) \bigr)
\end{align}
we arrive at the lower bound (\ref{addquant2}).
\end{proof}

The above inequalities formulated in the language of quantum maps obviously hold for the composition of states which belong to the set~(\ref{substoch}) and its subset~(\ref{subbist}).

The above results may be linked to properties of $coherent\ information$ defined by Schumacher and Nielsen \cite{SN96} as a function of exchange entropy $\s S\bigl(\hat\sigma(\Phi, \rho)\bigr)$:
\begin{align}
I(\Phi, \rho)=\s S\bigl(\Phi(\rho)\bigr)-\s S\bigl(\hat\sigma(\Phi, \rho)\bigr).
\end{align}
Denoting by $I_{1}$ the coherent information for the first operation, $I_{1}=\s S\bigl(\Phi_{1}(\rho)\bigr)-\s S\bigl(\hat\sigma(\Phi_{1}, \rho)\bigr)$, and by $I_{21}$ the analogous quantity for the concatenation $I_{21}=\s S\Bigl(\Phi_{2}\bigl(\Phi_{1}(\rho)\bigr)\Bigr)-\s S\bigl(\hat\sigma(\Phi_{2}\circ\Phi_{1}, \rho)\bigr)$, these authors proved \cite{SN96} the $quantum\ data\ processing\ inequality$:
\begin{align}
I_{21}\le I_{1}\ .
\end{align} 
This result implies the lower bound of (\ref{addquant2}).

\subsection{Entropy of a product of stochastic matrices}

Restricting our attention to the case of diagonal dynamical matrices we may analyze classical analogues of the above results. In this case an application of a quantum map corresponds to action of a stochastic matrix $T$ on a classical probability vector $P$.

Theorem \ref{thm2} implies that in the classical case for an arbitrary stochastic matrices $T_{1}$ and $T_{2}$ the following bounds hold,
\begin{equation}
 \s H(T_1) + \Bigl[\s H\bigl( T_2 T_1(P_*) \bigr) 
 - \s H\bigl(T_1(P_*) \bigr) \Bigr]
 \le \s H(T_2 T_1) \le  \s H(T_{1}) + \s H_{T_{1}P_{*}}(T_{2}) + \s H(T_{1}P_{*})\ ,   
\label{addquant2b}
\end{equation}
where $P_{*}=\frac{1}{N}(1,...,1)$.

However it is possible to obtain a stronger upper bound by using properties of strong subadditivity of entropy. Inequality (\ref{wrssat}) is equivalent to an inequality for the exchange entropy,
\begin{equation}
\s S\bigl(\hat\sigma(\Phi_{3}\circ\Phi_{2}\circ\Phi_{1}, \rho_{*})\bigr)+\s S\Bigl(\hat\sigma\bigl(\Phi_{2}, \Phi_{1}(\rho_{*})\bigr)\Bigr) \le \s S\bigl(\hat\sigma(\Phi_{2}\circ\Phi_{1}, \rho_{*})\bigr)+\s S\Bigl(\hat\sigma\bigl(\Phi_{3}\circ\Phi_{2}, \Phi_{1}(\rho_{*})\bigr)\Bigr)\ .
\label{ssaee}
\end{equation}
By restriction to the classical case we obtain an inequality for three stochastic matrices $T_{1}=T_{1}(\Phi_{1})$, $T_{a}=T_{a}(\Phi_{2})$ and $T_{2}=T_{2}(\Phi_{3})$ (note the notation chosen, which is convenient to state the final result). This inequality is similar to the strong subadditivity, but different weights for entropies are used. By substitution $T_{a}=\idty$ we arive with a result analogous to (\ref{addquant2}).

\begin{theorem} 
Let $T_{1}$ and $T_{2}$ be arbitrary stochastic matrices. Then the entropy of their product is bounded by
\begin{align}
\s H(T_{1})+\delta_{1} \le \s H(T_{2}T_{1}) \le \s H(T_{2})+\s H(T_{1})+\delta_{2}
\end{align}
where
\begin{align}
& \delta_{1}=\s H(T_{2}T_{1}(P_{*}))-\s H(T_{1}(P_{*})),\\
& \delta_{2}=\s H_{T_{1}P_{*}}(T_{2})-\s H(T_{2})\ .
\end{align} 
\end{theorem}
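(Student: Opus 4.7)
The plan is to derive the lower bound as the classical restriction of Theorem~\ref{thm2}, and to obtain the sharper upper bound by specializing the exchange-entropy strong subadditivity~(\ref{ssaee}) to diagonal dynamical matrices and choosing the middle map to be the identity.

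For the lower bound, I would pick $\Phi_1,\Phi_2$ with diagonal dynamical matrices corresponding to $T_1,T_2$, so that $\Phi_i(\rho_*)$ is the diagonal state with probability vector $T_i P_*$. Substituting~(\ref{sphist}), $\s S(\Phi)=\s H(T(\Phi))+\ln N$, into the lower bound of~(\ref{addquant2}), the additive $\ln N$'s cancel and the von~Neumann entropies of $\Phi_1(\rho_*)$ and $\Phi_2\circ\Phi_1(\rho_*)$ reduce to the Shannon entropies of $T_1 P_*$ and $T_2 T_1 P_*$. One reads off exactly $\s H(T_1)+\delta_1\le\s H(T_2 T_1)$.

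For the upper bound, I would apply~(\ref{ssaee}) to three stochastic maps $\Phi_1,\Phi_2,\Phi_3$ whose dynamical matrices are diagonal and correspond to stochastic matrices $T_1,T_a,T_2$. Using the classical identity~(\ref{ssigma}), each of the four exchange entropies in~(\ref{ssaee}) takes the form $\s H_P(T)+\s H(P)$, where the distribution $P$ is either $P_*$ (when the input of $\hat\sigma$ is $\rho_*$) or $T_1 P_*$ (when the input is $\Phi_1(\rho_*)$). The contributions $\s H(P_*)$ and $\s H(T_1 P_*)$ appear symmetrically on both sides and cancel, yielding
\begin{equation}
\s H_{P_*}(T_2 T_a T_1)+\s H_{T_1 P_*}(T_a)\le\s H_{P_*}(T_a T_1)+\s H_{T_1 P_*}(T_2 T_a)\ .
\end{equation}

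Finally, I would substitute $T_a=\idty$. Each column of $\idty$ is a Kronecker delta, so $\s H_{T_1 P_*}(\idty)=0$, while $T_a$ drops out of every product. What remains is $\s H_{P_*}(T_2 T_1)\le\s H_{P_*}(T_1)+\s H_{T_1 P_*}(T_2)$. Identifying $\s H_{P_*}(T)$ with the simplified entropy $\s H(T)$ of~(\ref{shanT}) and writing $\s H_{T_1 P_*}(T_2)=\s H(T_2)+\delta_2$, one obtains the claimed upper bound. The main obstacle is purely bookkeeping: one must carefully track which reference distribution weights each column-entropy, so that the $\s H(P_*)$ and $\s H(T_1 P_*)$ pieces align on the two sides of~(\ref{ssaee}) before the substitution $T_a=\idty$ is effected.
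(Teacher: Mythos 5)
Your plan is correct and follows exactly the route the paper takes: the lower bound is the classical restriction of Theorem~\ref{thm2} (already recorded as~(\ref{addquant2b})), and the sharper upper bound comes from restricting the exchange-entropy strong subadditivity~(\ref{ssaee}) to diagonal dynamical matrices for $T_1, T_a, T_2$, cancelling the $\s H(P_*)$ and $\s H(T_1P_*)$ terms via~(\ref{ssigma}), and setting $T_a=\idty$. Your bookkeeping of which reference distribution weights each column entropy matches the paper's argument.
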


This classical version of (\ref{addquant2}) valid for arbitrary stochastic matrices $T_{1}$ and $T_{2}$ can be considered as a direct generalization of the result of S{\l}omczy{\'n}ski~\cite{Sl02} obtained for bistochastic matrices.
If both matrices $T_1$ and $T_2$ are bistochastic a complementary lower bound for the entropy of their product holds $\s H(T_2) \le \s H(T_2 T_1)$, so in this case~(\ref{addclass2}) can be rewritten in the stronger symmetric form~\cite{ZKSS03} 
\begin{equation}
 \max\bigl( \{\s H(T_1), \s H(T_2) \} \bigr) 
 \le \min\bigl( \{\s H(T_1 T_2), \s H(T_2 T_1) \} \bigr) \ . 
\label{addclass4}
\end{equation}

\subsection{Dynamical subadditivity of quasi-free bistochastic quantum operations}

The general results of Theorems~\ref{thm1} and~\ref{thm2} can of course be applied to the case of quasi-free completely positive maps. It is however also possible to derive them directly for such special maps bypassing e.g.\ the strong subadditivity that was used in the general proof. By way of illustration we prove Theorem~\ref{thm1} for this class of maps. 

Let $R_1$ and $R_2$ be $N$-dimensional matrices such that $\norm{R_1} \le 1$ and $\norm{R_2} \le 1$. Remark first that 
\begin{equation}
 \Phi_{R_2} \circ \Phi_{R_1} = \Phi_{R_2 R_1} \ . 
\end{equation}
As 
\begin{equation}
 \abs{R_2R_1}^2 = R_2^*R_1^*R_1R_2 \le R_2^*R_2 = \abs{R_2}^2 
\end{equation} 
and as 
\begin{equation}
 x\in[0,1] 
 \mapsto \eta\bigl( {\textstyle\frac{1}{2}}\, (1+x) \bigr) 
 + \eta\bigl({\textstyle\frac{1}{2}}\, (1-x) \bigr)
\end{equation}
is monotonically decreasing, we obtain
\begin{equation}
 \s S\bigl( \Phi_{R_2} \bigr) 
 \le \s S\bigl( \Phi_{R_2 R_1} \bigr) 
 = \s S\bigl( \Phi_{R_2} \circ \Phi_{R_1} \bigr) \ .
\end{equation}
Using $\s S(\Phi_R) = \s S(\Phi_{R^*})$ allows to exchange the roles of $R_1$ and $R_2$ and so
\begin{equation}
 \max\Bigl( \bigl\{ \s S\bigl( \Phi_{R_1} \bigr), \s S\bigl( \Phi_{R_2} \bigr)\bigr\}\Bigr)
 \le \s S\bigl( \Phi_{R_2} \circ \Phi_{R_1} \bigr) \ .
\end{equation}

If we denote by $\{\lambda_1(R), \lambda_2(R), \ldots, \lambda_N(R)\}$ the singular values of the $N$-dimensional matrix $R$ arranged in decreasing order, then
\begin{align}
 &\lambda_1(R_2R_1) 
\nonumber \\
 &\ge \max\Bigl( \bigl\{ \lambda_j(R_2)\, \lambda_{N-j+1}(R_1) \,:\, 1 \le j \le N\bigr\}\Bigr)
\nonumber\\ 
 &\lambda_1(R_2R_1) + \lambda_2(R_2R_1)  
\nonumber \\
 &\ge \max\Bigl( \bigl\{ \lambda_{j_1}(R_2)\, \lambda_{N-j_1+1}(R_1) +
 \lambda_{j_2}(R_2)\, \lambda_{N-j_2+1}(R_1) \,:\, 1 \le j_1 < j_2 \le N\bigr\}\Bigr)
\nonumber\\ 
 &\vdots
\end{align}
It follows that 
\begin{align}
 &\bigl\{ {\textstyle\frac{1}{2}}\, (1 + \lambda_j(R_2R_1)) \,:\, 1 \le j \le N \bigr\}
 \bigcup \bigl\{ {\textstyle\frac{1}{2}}\, (1 - \lambda_j(R_2R_1)) \,:\, 1 \le j \le N
 \bigr\} 
\nonumber\\
\intertext{is less mixed than}
 &\bigl\{ {\textstyle\frac{1}{2}}\, (1 + \lambda_j(R_2)\, \lambda_{N-j+1}(R_1)) \,:\, 1 \le j \le N \bigr\}
 \bigcup \bigl\{ {\textstyle\frac{1}{2}}\, (1 - \lambda_j(R_2)\, \lambda_{N-j+1}(R_1)) \,:\, 1 \le j \le N
 \bigr\} \ .
\end{align}
This implies by concavity of $\eta$ that
\begin{equation}
 \s S(\Phi_{R_2R_1}) 
 \le 2 \sum_{j=1}^N \Bigl( \eta\bigl( {\textstyle\frac{1}{2}}\, (1 +
 \lambda_j(R_2)\, \lambda_{N-j+1}(R_1)) \bigr) 
 + \eta\Bigl( {\textstyle\frac{1}{2}}\, (1 - \lambda_j(R_2)\,
 \lambda_{N-j+1}(R_1)) \Bigr)\Bigr) \ .
\end{equation}
Finally observing that for $0 \le a,b \le 1$
\begin{equation}
 \Bigl\{ {\textstyle\frac{1}{2}}\, (1 + ab), {\textstyle\frac{1}{2}}\,
 (1 - ab), 0, 0 \Bigr\}
\end{equation}
is less mixed than
\begin{equation}
 \Bigl\{ {\textstyle\frac{1}{2}}\, (1 + a), {\textstyle\frac{1}{2}}\,
 (1 - a) \Bigr\} 
 \times \Bigl\{ {\textstyle\frac{1}{2}}\, (1 + b), {\textstyle\frac{1}{2}}\,
 (1 - b) \Bigr\} 
\end{equation}
we obtain
\begin{equation}
 \s S(\Phi_{R_2} \circ \Phi_{R_1}) 
 \le \s S(\Phi_{R_1}) + \s S(\Phi_{R_2}) \ .
\end{equation}

\section{Concluding remarks}

In this paper we introduced the composition $\odot$ between states on a bipartite system and derived some basic properties. This composition reflects the concatenation of quantum operations under the \jam\ isomorphism. Next, we introduced a simple notion of entropy of a quantum map in order to quantify its randomizing properties. We proved the property of subadditivity for a composition of arbitrary bistochastic maps and found its generalization for the case of stochastic quantum maps. The connection between maps and states allows to formulate these properties purely in terms of states and the $\odot$ composition. A restriction to the classical setting leads to a generalization of recently obtained bounds on the entropy of a product of two bistochastic matrices~\cite{Sl02} to the case of a product of arbitrary stochastic matrices.

Recently~\cite{wolcir}, the concatenation of quantum maps has been investigated from the point of view of divisibility. The authors consider the determinant of a super-operator instead of the entropy and show that it is contractive with respect to composition.

A more detailed understanding of the randomizing properties of a quantum map should be provided by constructing a Markov like process. A possible track is to generate a state on a spin half-chain in the spirit of finitely correlated states~\cite{fannacwer} and to consider the associated dynamical entropy. This is a subject of future research.  

\medskip
\noindent
\textbf{Acknowledgements:}
We enjoyed fruitful discussions with I.~Bengtsson, V.~Cappellini and W.~S{\l}omczy{\'n}ski. We acknowledge financial support by the bilateral project BIL05/11 between Flanders and Poland, by the Polish Ministry of Science and Information Technology under the grant 1\, P03B\, 042\, 26
and by the European Research Project SCALA.

\appendix
\section{Appendix}

The aim of this appendix is to make for low dimensional one-particle spaces the connection between quasi-free states the usual density matrix description of a state more explicit. The CAR algebra $\c A_N := \c A(\c H_N)$ is isomorphic to the algebra of matrices of dimension $2^N$. Because of the finite dimensionality all representations are equivalent. A particularly useful representation is on Fermionic Fock space
\begin{equation}
 \Gamma(\c H_N) := \Cx \oplus \c H_N \oplus \c H_N^{\,2,\, \r{as}} \oplus \cdots \oplus \Cx \ .
\end{equation}
Here $\c H_N^{\,k,\, \r{as}}$ denotes the subspace of antisymmetric vectors under the action of the permutation group of $\c H_N^{\otimes k}$. A quasi-free state $\<\ \>_Q$ with $0 \le Q < \idty$ corresponds to a density matrix of the form
\begin{equation} 
 \det(\idty -Q)\ \Bigl\{ 1 \oplus \frac{Q}{\idty - Q} 
 \oplus \Bigl( \frac{Q}{\idty - Q} \otimes \frac{Q}{\idty - Q}\Bigr)\Bigr|_{\c H_N^{\, 2,\, \r{as}}} 
 \oplus \cdots \Bigr\} \ .
\end{equation}
The pure quasi-free states are limiting cases of this formula, their density matrices are one-dimensional projectors of the form
\begin{equation}
 0 \oplus \cdots \oplus \Bigl( \underbrace{P \otimes \cdots \otimes P}_{k\text{-times}} \Bigr)\Bigr|_{\c H_N^{\, k,\, \r{as}}} \oplus \cdots
\end{equation}
Here $P$ is a $k$-dimensional projector acting on $\c H_N$, actually the symbol of the corresponding state.

For $N = 1$, the algebra $\c A_1$ is isomorphic to $\c M_2$. Any state $\rho$ on $\Cx^2$ can be mapped on a quasi-free state by choosing a suitable identification of $\c M_2$ with $\c A_1$. Once this identification is fixed, the convex set of quasi-free states is equal to the set of diagonal density matrices (in the basis diagonalizing $\rho$).

$\c A_2$ is isomorphic to $\c M_4$. One can show that any density matrix on $\Cx^4$ which has the property that the product of its largest and smallest eigenvalues is equal to the product of the two others can be mapped onto a quasi-free state. Once this identification fixed, the convex hull of the set of quasi-free states equals 
\begin{equation}
 \{ \lambda_1 \oplus \lambda_2\, \rho \oplus \lambda_3 \,:\, (\lambda_1, \lambda_2, \lambda_3) \text{ a probability vector and $\rho$ a 2D density matrix } \}\ .
\end{equation}    
This is exactly the set of block diagonal density matrices in a decomposition $\Cx^4 = \Cx \oplus \Cx^2 \oplus \Cx$. 

For $N = 3$, the convex hull of quasi-free states is also seen to be the block diagonal density matrices in a decomposition $\Cx^8 = \Cx \oplus \Cx^3 \oplus \Cx^3 \oplus \Cx$. 

For $N = 4$, and higher dimensions, things change. It turns out that the convex hull of quasi-free states is now a strict subset of  the block diagonal density matrices in a decomposition $\Cx^{16} = \Cx \oplus \Cx^4 \oplus \Cx^6 \oplus \Cx^4 \oplus \Cx$. The pure quasi-free states with support in the $\Cx^6$ term correspond to particular 1D projectors, namely restrictions of projectors the form $P \otimes P$ with $P$ a 2D projector on $\Cx^4$ to the antisymmetric subspace of $\Cx^4 \otimes \Cx^4 \approx \Cx^6$. Such $P$ define a 8D real manifold, while the 1D projectors on $\Cx^6$ form a 10D real manifold. Nevertheless the convex hull of the $P \otimes P$ has a non-zero volume. In fact any density matrix on $\Cx^6$ may be obtained by a linear, generally not convex, combination of pure quasi-free states supported in $\Cx^6$.   


\end{document}